\documentclass[11pt]{article}
\usepackage[margin=1in]{geometry}
\usepackage[final]{microtype}
\usepackage{epsfig}
\usepackage{graphics}
\usepackage{latexsym}
\usepackage{amsmath}
\usepackage{amsfonts}
\usepackage{amssymb}
\usepackage{mathrsfs}
\usepackage[dvipsnames]{xcolor}
\usepackage{amsthm}
\usepackage{xspace}
\usepackage{epstopdf}
\usepackage{float}
\usepackage{hyperref}
\usepackage{pgfplots}
\pgfplotsset{compat=1.18}
\usepackage{longtable}

\numberwithin{equation}{section}
\usepackage[ruled,vlined]{algorithm2e}
\SetArgSty{textrm}

\usepackage[english]{babel}
\usepackage[nottoc]{tocbibind}

\usepackage{caption}
\usepackage{subcaption}

\usepackage{pifont}
\usepackage{booktabs}
\usepackage{multirow}

\theoremstyle{plain}
\newtheorem{theorem}{Theorem}

\newtheorem{lemma}{Lemma}

\theoremstyle{definition}

\theoremstyle{remark}

\makeatletter
\@addtoreset{equation}{section}
\def\section{\@startsection {section}{1}{\z@}{-3.5ex plus -1ex minus
 -.2ex}{2.3ex plus .2ex}{\large\bf}}
\makeatother

\def\bfm#1{\mbox{\boldmath$#1$}}

\def\0{\bfm 0}

\DeclareMathAlphabet{\mathpzc}{OT1}{pzc}{m}{it}
\newcommand{\argmin}{\operatorname*{arg\,min}}

\newcommand{\OPT}{\operatorname{OPT}}

\newcounter{my}

\newcounter{my2}

\newcounter{my3}

\newcounter{my4}

\newcounter{my5}

\newcounter{my6}

\allowdisplaybreaks

\begin{document}

\title{Optimality and Impossibility in Multidimensional Strategyproof Facility Location}
\author{Jianan Lin\\[0.75em]
Rensselaer Polytechnic Institute
}
\date{}
\maketitle

\begin{abstract}
We study facility location in multidimensional normed spaces with the objective of minimizing social cost, the sum of agents' distances to their nearest facility.
A deterministic mechanism selects $k$ facility locations based on the reports of $n$ agents and is strategyproof (SP) if no agent can reduce her distance by misreporting her location.
We establish optimal approximation ratios and impossibility results.
For a single facility in the Euclidean plane, we prove that the optimal approximation ratios among deterministic SP mechanisms are exactly those obtained by Goel and Hann-Caruthers [SCW 2023] for the coordinate-wise median (CM): $\frac{\sqrt{2(n^2+1)}}{n+1}$ for odd $n\ge3$ and $\sqrt{2}$ for even $n\ge4$.
We remove the anonymity assumption from their optimality result.
However, CM is not always optimal in non-Euclidean spaces: we give an SP mechanism that minimizes social cost exactly in $\ell_\infty^d$, whereas CM has a tight uniform approximation ratio of $2$ already in the plane.
We then consider two facilities in $\ell_p$ spaces with $1<p<\infty$ and prove two impossibility results for every $n>2$.
In the plane, no deterministic SP mechanism has a finite approximation ratio.
In higher dimensions, the same impossibility holds for anonymous deterministic SP mechanisms.
Finally, a reduction using distant agents and a limiting argument extends both impossibility results to every $k>2$ with $n>k$.
\end{abstract}

\section{Introduction}\label{sec:introduction}

Facility location is a basic problem in mechanism design without money and social choice~\cite{moulin1980strategy,schummer2007mechanism,procaccia2013approximate}.
A planner chooses locations for public facilities using information supplied by the people who will use them.
Each agent prefers a facility close to her own location and may benefit from reporting a different location to influence the outcome.
A strategyproof mechanism makes truthful reporting optimal regardless of the other reports.
The planner also seeks to serve the population efficiently, commonly measured by social cost, the sum of distances from agents to their nearest facilities.
Understanding how much efficiency must be sacrificed to ensure strategyproofness is a central question in this setting.

For a single facility, classical characterization results explain the role of median rules.
Moulin~\cite{moulin1980strategy} developed the generalized-median characterization for single-peaked preferences on the line, where the ordinary median is both strategyproof and socially optimal.
In the Euclidean plane, Peters, van der Stel, and Storcken~\cite{peters1993range} connected strategyproofness with continuity and convexity of the range and obtained characterizations under anonymity.
Procaccia and Tennenholtz~\cite{procaccia2013approximate} made facility location a central example of approximate mechanism design without money, focusing on the efficiency guarantees achievable subject to incentive constraints.
In multidimensional spaces, taking a median in each coordinate remains strategyproof, but can increase social cost above the optimum.
Meir~\cite{DBLP:conf/sagt/Meir19} established a $\sqrt{d}$ approximation upper bound for the coordinate-wise median in $d$-dimensional Euclidean space for any number of agents, and determined its exact ratio of $\frac{\sqrt{5}}{2}$ for three agents in the plane.
Goel and Hann-Caruthers~\cite{GoelH23} determined its exact approximation ratio for odd populations and proved its optimality among anonymous deterministic strategyproof mechanisms in the plane.
More recently, Gravin and Jia~\cite{gravin2025approximation} established dimension-independent approximation bounds for the coordinate-wise median in $\ell_p$ spaces, with matching bounds as the dimension grows.
Hastings~\cite{hastings2026strategic} analyzed more general norm-based social costs, obtaining as a special case the tight uniform approximation ratio $2^{1-\frac{1}{p}}$ for CM in $\ell_p^2$.

With multiple facilities, each agent's cost depends on the nearest selected location, and even the problem on a line has substantially different approximation guarantees.
For two facilities, Procaccia and Tennenholtz~\cite{procaccia2013approximate} studied the mechanism that selects the leftmost and rightmost reports.
Its social-cost approximation ratio is $n-2$.
Lu et al.~\cite{lu10mechanism} proved a lower bound of $\frac{n-1}{2}$ for deterministic strategyproof mechanisms on the line.
Fotakis and Tzamos~\cite{fotakis2014power} subsequently obtained the tight ratio of $n-2$ for $n\ge5$ through a characterization of mechanisms with finite approximation guarantees.
They also proved that anonymous deterministic strategyproof mechanisms for three or more facilities cannot achieve a finite approximation ratio on the line, already for instances with one more agent than facilities.
Thus, even on the line, adding facilities can turn an exactly solvable problem into one with a population-dependent approximation ratio or no finite guarantee.

In multidimensional normed spaces, several lower-bound questions remain open.
Jia~\cite{jia2026aggregation} recently proved that the coordinate-wise median is optimal among continuous, anonymous, deterministic strategyproof mechanisms in every finite-dimensional Euclidean space, for odd $n\ge3$ and even $n\ge4$.
Together with the planar results, this raises the question of whether anonymity is necessary for median optimality and whether that optimality persists under other spatial norms.
For multiple facilities, Fotakis and Tzamos~\cite{fotakis2014power} proved that even two facilities admit no finite deterministic strategyproof approximation on a star with three agents.
Extending these lower bounds to normed spaces requires accounting for facilities placed outside the line or star supporting the reports.
Can deterministic strategyproof mechanisms achieve a finite social-cost approximation in multidimensional normed spaces, and how does the answer depend on anonymity and the norm?

\subsection{Our Contributions}

We study deterministic SP mechanisms for $n$ agents and $k$ facilities in multidimensional normed spaces, focusing on $\ell_p^d$ and the social-cost objective.
Our main contributions are as follows.

\begin{itemize}
    \item \textbf{Single-facility optimality.}
    We prove that CM is optimal among all deterministic SP mechanisms in the Euclidean plane for every $n\ge3$.
    Our lower bounds match its known approximation ratios~\cite{GoelH23}: $\frac{\sqrt{2(n^2+1)}}{n+1}$ for odd $n\ge3$ and $\sqrt{2}$ for even $n\ge4$.
    We remove the anonymity assumption from their optimality result.
    For $n=2$, a dictatorship attains ratio $1$.
    The proof groups agents at three locations and uses SP to constrain the mechanism's coordinate choices.
    In non-Euclidean spaces, CM is not always optimal: we show that selecting the lexicographically smallest social-cost minimizer is anonymous and SP in $\ell_\infty^d$ for every finite $d\ge2$.
    This achieves ratio $1$, improving on CM's tight uniform ratio of $2$ already in the plane~\cite{hastings2026strategic}.

    \item \textbf{Two-facility impossibility.}
    For every $1<p<\infty$ and every fixed $n\ge3$, we prove that no deterministic SP two-facility mechanism in $\ell_p^2$ has a finite approximation ratio.
    In every finite dimension $d\ge2$, we obtain the same impossibility for anonymous mechanisms.
    The higher-dimensional proof needs only invariance under exchanging one fixed pair of agents.
    We analyze the facility locations attainable by changing a single report and use the resulting geometric restrictions to derive a contradiction.

    \item \textbf{Extension to multiple facilities.}
    We extend both impossibility results to every $3\le k<n$, with the same respective assumptions on anonymity.
    The proof uses a reduction that converts a deterministic SP $\rho$-approximation for $n$ agents and $k$ facilities into one for $n-k+2$ agents and two facilities.
    By placing $k-2$ agents increasingly far away and taking a common limit of the remaining outputs, the reduction preserves SP, the approximation guarantee, and any symmetry among the retained agents.
    The reduction applies to every finite-dimensional normed space.
\end{itemize}

\subsection{Related Work}

\paragraph{Single-Facility Location in Different Metric Spaces.}
Beyond the line and Euclidean spaces discussed above, \cite{schummer2002strategy} studied strategyproof location on networks, while \cite{dokow2012mechanism} considered discrete lines and cycles.
For the maximum-cost objective, \cite{alon2010strategyproof} established approximation bounds on networks.
On a circle, \cite{DBLP:conf/sagt/Meir19} gave a randomized $\frac{7}{6}$-approximation for three agents, and \cite{rogowski2025improved} improved the guarantee for larger odd populations by mixing random dictatorship with the proportional circle distance mechanism.
More recently, \cite{chan2026circle} established a randomized SP $\frac{3}{2}$-approximation for every $n\ge3$ on the circle.
In normed spaces, \cite{lin2020nearly} studied the characterization of deterministic SP mechanisms for two agents, and \cite{tang2020characterization} characterized group-strategyproof mechanisms in strictly convex spaces.
For Euclidean social cost, \cite{barak2026facility,chan2026strategyproof} independently obtained a randomized $\frac{4}{\pi}$-approximation in the plane by randomly rotating the coordinate system before applying the coordinate-wise median.
Related work also considers alternative ways of aggregating agents' distances, including least-squares cost on lines and trees~\cite{feldman2013strategyproof} and $L_p$-norm social cost on the line~\cite{feigenbaum2017approximately}, in the Euclidean plane~\cite{chan2026strategyproof}, and in more general normed spaces~\cite{hastings2026strategic}.

\paragraph{Multiple-Facility Location.}
In addition to the deterministic results reviewed above, \cite{escoffier2011many} studied approximation for multiple facilities under total and maximum cost, and \cite{lu2024design} investigated the structure and design of dictatorial two-facility mechanisms on the line.
Randomization gives substantially better social-cost guarantees for two facilities: \cite{lu2009tighter} established early upper and lower bounds, and \cite{lu10mechanism} gave the Proportional mechanism with approximation ratio $4$ on general metric spaces.
More recently, \cite{ma2026breaking} improved the upper bound to $\frac{11}{3}$ on Ptolemaic spaces, which include Euclidean spaces, and the lower bound to $\frac{1+\sqrt{2}}{2}$ even on the line.
The ratio was further improved to $\frac{74+4\sqrt{3}}{23}\approx 3.5186$ independently in \cite{jia2026product} for the line and \cite{chan2026randomized} for Ptolemaic spaces.
For three facilities on the line, \cite{aziz2026anchoring} introduced the Random-Anchor Volume mechanism and proved that it is strategyproof in expectation with approximation ratio $8$.
This ratio was subsequently improved to $6$ in both \cite{jia2026product} and \cite{chan2026randomized}.
For $k=n-1$ facilities, \cite{chan2026randomized} gave a randomized $\Theta(\sqrt{n})$-approximation on arbitrary metric spaces.
Other variants include concave connection costs~\cite{fotakis2013strategyproof} and winner-imposing mechanisms, where agents selected to host facilities must use those facilities~\cite{fotakis2013winner}.

\section{Preliminaries}\label{sec:preliminaries}

Let $N=[n]=\{1,\ldots,n\}$ be a set of agents in a normed space $X=(\mathbb{R}^d,\|\cdot\|)$, where $d\ge2$.
Each agent $i$ has a private location $x_i\in X$, and we write $\mathbf{x}=(x_1,\ldots,x_n)$ for the location profile.
We mainly consider $\ell_p^d$ spaces, where
\[
    \|v\|_p=\left(\sum_{r=1}^d|v_r|^p\right)^{1/p}\quad(1\le p<\infty),
    \qquad
    \|v\|_\infty=\max_{r\in[d]}|v_r|.
\]
The Euclidean case corresponds to $p=2$.

A deterministic $k$-facility mechanism maps each reported profile to a nonempty set $f(\mathbf{x})\subseteq X$ of at most $k$ facility locations.
Facilities are identical, and each agent uses a nearest facility.
Representing the output as a set allows facilities to coincide, since repeated facilities do not affect any agent's cost.
There are no payments.
For a facility set $Y$, agent $i$'s cost and the social cost are
\[
    c_i(\mathbf{x},Y)=\min_{y\in Y}\|x_i-y\|,
    \qquad
    \operatorname{SC}(\mathbf{x},Y)=\sum_{i=1}^n c_i(\mathbf{x},Y).
\]
We focus on $1\le k<n$, since placing a facility at every distinct report gives zero cost when $k\ge n$.

A mechanism is \emph{strategyproof} (SP) if no agent can reduce her cost by misreporting her location.
Formally, for every profile $\mathbf{x}$, every agent $i\in N$, and every alternative report $x_i'\in X$,
\[
    c_i\bigl(\mathbf{x},f(\mathbf{x})\bigr)
    \le
    c_i\bigl(\mathbf{x},f(x_i',\mathbf{x}_{-i})\bigr),
\]
where $(x_i',\mathbf{x}_{-i})$ denotes the profile obtained by replacing only agent $i$'s report.
Both costs are evaluated at her true location $x_i$.
A mechanism is \emph{anonymous} if permuting the agents' reports leaves its facility set unchanged: $f(x_{\pi(1)},\ldots,x_{\pi(n)})=f(\mathbf{x})$ for every permutation $\pi$ of $N$.
We impose anonymity only when explicitly stated.

The optimal social cost of a profile is
\[
    \OPT_k(\mathbf{x})
    =\min_{\substack{Y\subseteq X\\1\le |Y|\le k}}
      \operatorname{SC}(\mathbf{x},Y).
\]
A mechanism $f$ is a $\rho$-approximation if
\[
    \operatorname{SC}\bigl(\mathbf{x},f(\mathbf{x})\bigr)
    \le \rho\,\OPT_k(\mathbf{x})
    \qquad\text{for every }\mathbf{x}\in X^n.
\]
Its approximation ratio $\operatorname{AR}(f)$ is the infimum of such $\rho\ge1$, and is infinite if no finite $\rho$ satisfies this inequality.
In particular, a finite approximation guarantee requires zero mechanism cost whenever $\OPT_k(\mathbf{x})=0$.
An unbounded approximation ratio means that no finite guarantee holds over all location profiles, even with the parameters $n$, $k$, and the space $X$ fixed.

\section{Single-Facility Location}\label{sec:single-facility}

For a single facility, we write the mechanism's output as a point in the underlying space.
The coordinate-wise median (CM) chooses a median of the reports in each coordinate, using the lower median when $n$ is even.
We prove its optimality in the Euclidean plane without anonymity, then give an exact SP mechanism for $\ell_\infty$ spaces.

\subsection{Lower Bounds in the Euclidean Plane}\label{sec:single-euclidean}

Throughout this subsection, CM uses a fixed orthonormal coordinate system.
Goel and Hann-Caruthers~\cite{GoelH23} established its optimality among anonymous SP mechanisms and determined its exact approximation ratio for odd $n$.
The following theorem removes the anonymity requirement.

\begin{theorem}\label{thm:planar-optimality}
For every $n\ge3$, the optimal approximation ratio of a deterministic SP single-facility mechanism in $\ell_2^2$ is
\[
    \alpha_n=\sqrt{1+\left(\frac{\lfloor n/2\rfloor}{\lceil n/2\rceil}\right)^2}
    =\begin{cases}
        \frac{\sqrt{2(n^2+1)}}{n+1},&\text{if $n$ is odd},\\[3pt]
        \sqrt{2},&\text{if $n$ is even}.
    \end{cases}
\]
The coordinate-wise median attains this ratio.
For $n=2$, the optimal ratio is $1$.
\end{theorem}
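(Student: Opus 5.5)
The upper bound needs no new work: the coordinate-wise median attains $\alpha_n$, by Goel and Hann-Caruthers~\cite{GoelH23} for odd $n$. For even $n$ we take Meir's $\sqrt{d}$ bound~\cite{DBLP:conf/sagt/Meir19} with $d=2$. For $n=2$, a dictatorship is SP, and any point of the segment $[x_1,x_2]$ is optimal, so its ratio is $1$. The real content is the lower bound: every deterministic SP mechanism $f$ in $\ell_2^2$ has ratio at least $\alpha_n$, without anonymity.

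Write $m=\lceil n/2\rceil$ and $\ell=\lfloor n/2\rfloor$. The plan is to restrict attention to profiles in which the agents occupy only three locations, forming a right isosceles configuration. We take $A=(0,0)$, $B=(1,0)$, $C=(0,1)$, or the appropriate rotated or scaled variants.

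\textbf{Step 1: restrict to a line.} On the line through two points, the one-dimensional restriction of an SP mechanism with finite ratio must behave like a generalized median. In particular, since $\OPT$ is attained by the median, any output off the line costs extra. This step uses SP to pin the output to the segment when all agents sit on two points with a majority at one of them. If $m$ agents sit at $A$, then with finite ratio the output should be at $A$. We prove this by a standard one-dimensional argument: move agents one at a time and apply SP. The aim is to show that the output is forced to $A$ in the majority case, without invoking anonymity. The key point is that moving a single agent can only move the output in ways constrained by her preference ball.

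\textbf{Step 2: the three-point profiles.} We split the agents into a group $S$ of size $\ell$ at $B$, a group $T$ of size $\ell$ at $C$, and (for odd $n$) one extra agent at $A$; for even $n$, one of the groups is shifted appropriately. We then move agents of $S$ from $A$ toward $B$ one at a time. At each move, SP for the moved agent, which must weakly prefer the truthful output, and SP for the reverse deviation together confine the new output to the intersection of two disks. These disks are centered at the agent's old and new positions, with radii given by the distances to the old and new outputs.

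We carry out this confinement inductively, first for the $S$-moves and then for the $T$-moves. The goal is to show that the final output $y$ satisfies $y_1\le 0$ in the direction of $B$ or $y_2\le 0$ in the direction of $C$, or more precisely that it lies in the region dominated by $A$. After a symmetric relabelling of which group is moved first, we get the cost bound
\[
\operatorname{SC}(\mathbf{x},y)\ge \ell\sqrt2+\ell\cdot 1+\dots,
\]
to be compared with the optimum. Optimizing the placement (the weighted Fermat point) should give the ratio $\sqrt{m^2+\ell^2}/m=\alpha_n$. For the even case the extremal configuration is instead two groups of $n/2$ at $B$ and $C$ with $A$ empty. It then suffices to show that the output is forced to the vertex of the right angle, which costs $n$ against an optimum of $n/\sqrt2$.

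\textbf{Main obstacle.} The hard step is the inductive confinement in Step~2. Without anonymity, the output can depend on which agent moves, so the argument must go through for an arbitrary ordering of the agents. We must also show that the disks intersect only in a region that still forces the corner. We first try to prove an invariant stating that the output always lies in the closed quadrant $\{y_1\le0\}\cup\{y_2\le 0\}$ cut out by the current majority. If that fails, we will pass to a limit using a sequence of scaled configurations. These make the disk constraints approach half-plane constraints, since the radii are large compared with the displacement.
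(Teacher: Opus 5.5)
Your upper-bound part is fine: Meir's $\sqrt d$ bound gives $\sqrt2$ for even $n$, and the paper uses the equivalent comparison $\operatorname{SC}_2\le\operatorname{SC}_1\le\sqrt2\,\operatorname{SC}_2$. The lower bound, however, has a genuine gap at Step~1, and Step~1 is where anonymity is used implicitly. Your claim that $\lceil n/2\rceil$ agents at $A$ force the output to $A$ is false for finite-ratio SP mechanisms, since a dictatorship has ratio $n-1$. Even under the contradiction hypothesis it cannot come from a one-dimensional argument. Take $n=9$ and the coordinate-wise weighted median in which agent $1$ has weight $3$ and every other agent weight $1$. It is SP, and a $4$-agent coalition containing agent $1$ beats the complementary $5$ agents. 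Yet on every collinear profile its ratio is at most $5/4<\alpha_9\approx1.28$.

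The paper never proves that majorities win. It first normalizes to a continuous, translation-equivariant, homogeneous rule (Lemma~\ref{lem:single-normalization}). It then shows that each coalition either always or never obtains its location on two-location profiles (Lemma~\ref{lem:single-binary}). Otherwise a rotated right-angle profile with output at the vertex gives ratio $\ge\sqrt2$; this is the only role of your ``vertex'' picture, and it is just one alternative. The paper then works with a \emph{minimal} winning coalition $T$ of size $t\le\lceil n/2\rceil$. Non-anonymity is absorbed because $\sqrt{1+((n-t)/t)^2}\ge\alpha_n$ for every such $t$.

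The same issue sinks your even case. CM applied to agents $1,\dots,n-1$ is SP and outputs a report on every two-location profile, with ratio $1$ there. So the $\sqrt2$ bound cannot be obtained by forcing the vertex on a two-location profile; it must come from three-location profiles.

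Step~2 targets the wrong profile and lacks the ingredient that determines the output. Suppose the output were forced to $A$. Your configuration ($\ell$ agents at $B$, $\ell$ at $C$, one at $A$) is still not extremal. For $n=3$ its ratio is $2/\sqrt{2+\sqrt3}\approx1.035<\sqrt5/2$, and for $n=5$ it is about $1.16<\alpha_5\approx1.20$.

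The extremal profile is different. It puts the two parts $A,B$ of $T$ at $-re_1$ and $re_1$, and $N\setminus T$ at $e_2$, with $r=t/(n-t)$. What must be proved is that the output is the midpoint $0$. This point is nobody's report, and it lies in a frame $(e_1,e_2)$ determined by the mechanism rather than chosen in advance.

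Single-agent moves cannot deliver this. Each move only places the new output in a closed disk around the new report and outside an open disk around the old one. These are relations between consecutive outputs; chaining them identifies neither a non-report output nor the frame. Your proposed limit to half-planes is a heuristic version of the projection property, not an argument.

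The paper gets this step from global structure:
\begin{itemize}
\item continuity (Peters--van der Stel--Storcken);
\item the characterization of each agent's response as projection onto her convex menu;
\item symmetric, positive semidefinite, pairwise commuting derivative blocks (Lemma~\ref{lem:single-commuting});
\item the resulting rank-one analysis on acute triangles, which shows that the three-group rule equals CM in a fixed orthonormal frame (Lemma~\ref{lem:single-three-groups}).
\end{itemize}
Without a substitute for this step, together with the minimal-winning-coalition device, the proposal does not yield the lower bound.
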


Suppose a mechanism has approximation ratio below $\alpha_n$.
We group its agents at three locations and show that the induced rule agrees with CM on a family of profiles.
Choosing the group sizes and locations then gives the required lower bound.
Throughout the Euclidean argument, $\|\cdot\|$ denotes Euclidean distance and $\langle\cdot,\cdot\rangle$ its inner product.

A planar Euclidean SP mechanism with convex range is continuous~\cite{peters1993range}.
A finite approximation ratio implies $f(a,\ldots,a)=a$, so the range is the whole plane and this continuity theorem applies.
For a continuous SP mechanism, the response to one report is the nearest-point projection onto its closed convex range~\cite[Lemma 2.1]{jia2026aggregation}.
Thus, writing
\[
    M_i(\mathbf{x}_{-i})=\{f(a,\mathbf{x}_{-i}):a\in\mathbb{R}^2\},
\]
we have $f(a,\mathbf{x}_{-i})=P_{M_i(\mathbf{x}_{-i})}(a)$, where $P_C$ denotes Euclidean projection onto $C$.
The set $M_i(\mathbf{x}_{-i})$ is the agent's unilateral menu, also called an image set in facility location~\cite[Section 3.1]{lu10mechanism}.
In particular, $f$ is $1$-Lipschitz in each report.
Changing reports one at a time therefore gives
\[
    \|f(\mathbf{x})-f(\mathbf{x}')\|\le\sum_{i=1}^n\|x_i-x_i'\|.
\]
The projection inequality also gives radial invariance~\cite[Corollary 2.3]{jia2026aggregation}: if $f(\mathbf{x})=y$, then
\begin{equation}\label{eq:single-radial}
    f\bigl(y+t_1(x_1-y),\ldots,y+t_n(x_n-y)\bigr)=y
    \qquad(t_1,\ldots,t_n\ge0).
\end{equation}
Indeed, $\langle x_i-y,z-y\rangle\le0$ for every $z\in M_i(\mathbf{x}_{-i})$, and multiplying $x_i-y$ by a nonnegative scalar preserves this inequality.
The reports can be changed one at a time while the output remains $y$.
The next lemma uses these properties to normalize the mechanism while preserving its approximation guarantee.

\begin{lemma}[Normalization]\label{lem:single-normalization}
Every planar Euclidean SP mechanism with finite approximation ratio has a continuous SP replacement $\bar f$ with no larger approximation ratio such that $\bar f(\mathbf{x}+v)=\bar f(\mathbf{x})+v, \quad \bar f(t\mathbf{x})=t\bar f(\mathbf{x}) \quad(v\in\mathbb{R}^2,\ t\ge0)$. Here $\mathbf{x}+v$ adds $v$ to every report, and $t\mathbf{x}$ multiplies every report by $t$.
\end{lemma}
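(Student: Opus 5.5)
We construct $\bar f$ from $f$ by a limiting procedure (blow-up at the origin), rather than by averaging, since averaging does not preserve SP. By the preceding discussion, $f$ is continuous, $1$-Lipschitz in each report, and satisfies radial invariance \eqref{eq:single-radial}.

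\emph{Step 1: translation equivariance.} For $v\in\mathbb{R}^2$ set $f_v(\mathbf{x})=f(\mathbf{x}+v)-v$. Each $f_v$ is SP, since distances are translation invariant. It also has the same approximation ratio, since social cost and $\OPT$ are translation invariant. The family $\{f_v\}$ is uniformly equi-Lipschitz: $\|f_v(\mathbf{x})-f_v(\mathbf{x}')\|\le\sum_i\|x_i-x_i'\|$. It is also pointwise bounded, because $f_v(a,\ldots,a)=a$ gives $\|f_v(\mathbf{x})\|\le\|x_1\|+\sum_i\|x_i-x_1\|$. Along a sequence $v_m$, Arzel\`a--Ascoli on compacta with a diagonal argument therefore yields a locally uniform limit $g$. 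SP is preserved under pointwise limits, because the inequality $\|x_i-f_v(\mathbf{x})\|\le\|x_i-f_v(x_i',\mathbf{x}_{-i})\|$ is closed. The inequality $\operatorname{SC}(\mathbf{x},f_v(\mathbf{x}))\le\rho\,\OPT_1(\mathbf{x})$ is closed as well, so the approximation guarantee passes to $g$. The difficulty is making the limit exactly translation equivariant. I would take $v_m$ along a Banach-limit/ultrafilter over $\mathbb{Z}^2$-type averages of shifts. Directly, one can define $g(\mathbf{x})=\lim_{\mathcal U}f_{v}(\mathbf{x})$ along an ultrafilter on $\mathbb{R}^2$ that is invariant in the sense that $\lim_{\mathcal U}h(v+w)=\lim_{\mathcal U}h(v)$ for bounded $h$. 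Such an ultrafilter does not exist in general, so instead I would use a cleaner route, Step 2.

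\emph{Step 2: blow-up via scaling, which gives both properties at once.} Define $f^{(t)}(\mathbf{x})=t^{-1}f(t\mathbf{x})$ for $t>0$. These are SP with the same ratio and share the same Lipschitz and boundedness estimates. Let $\bar f=\lim_{t\to0^+}f^{(t)}$ along a subsequence; this is continuous, SP, and a $\rho$-approximation. To get homogeneity I would use radial invariance. Set $y=f(\mathbf 0)=0$, which follows from unanimity. Radial invariance then says that whenever $f(\mathbf{x})=0$, also $f(t_1x_1,\ldots)=0$. More generally, projection onto menus shows that $f^{(t)}$ converges to the rule obtained by projecting onto the tangent cones of the menus at their base points, and projections onto cones are positively homogeneous. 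The main obstacle is exactly this: proving that the limit is independent of the subsequence and is $1$-homogeneous. I would attack it by induction on agents using $\bar f(a,\mathbf{x}_{-i})=P_{\bar M_i}(a)$ and showing that the limiting menus are cones.

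\emph{Step 3: translation.} Once homogeneity holds, I would apply the blow-up at an arbitrary base profile. Replacing $f$ by $\bar f$ and then blowing up at base point $\mathbf{x}^0$ via $\mathbf{x}\mapsto t^{-1}(\bar f(\mathbf{x}^0+t\mathbf{x})-\bar f(\mathbf{x}^0))$ would not directly give translation equivariance. Instead I would combine the constructions: take the blow-down limit $t\to\infty$ of $f^{(t)}$ centered at a diagonal profile. Under blow-down, any fixed translation $v$ becomes $v/t\to0$. Hence $\bar f(\mathbf{x}+v)-v=\lim t^{-1}(f(t\mathbf{x}+tv)-tv)$. Writing $f$'s deviation from translation equivariance as $O(\|v\|)$ does not vanish after scaling, so the limit must again be controlled via the cone structure. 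My plan is to show that the blow-down limit menus are translates of asymptotic cones, which are the recession cones of the convex menus. Recession cones are translation invariant, and this yields both identities. Verifying that the menus of the limit mechanism are exactly these recession-cone objects is the hardest step.
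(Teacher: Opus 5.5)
\textbf{There is a genuine gap: none of your limits is shown to be translation-equivariant, and the scaling route of Steps 2--3 cannot produce one.}

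Your handling of the limit step itself is sound. The translated or rescaled copies of $f$ are SP, equi-Lipschitz and pointwise bounded, and both the incentive inequality and $\operatorname{SC}\le\rho\,\OPT_1$ are closed conditions. So any pointwise limit is a continuous SP $\rho$-approximation, and the paper passes to the limit the same way.

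The scaling route fails on a concrete example. With $n=3$, let $f$ output, in each coordinate $r$, the median of $x_{1,r},x_{2,r},x_{3,r},0,0$ (two phantoms at the origin).
\begin{itemize}
\item Each menu is a box and the output is the Euclidean projection onto it, so $f$ is SP.
\item $f$ is unanimous and stays in the bounding box of the reports, so its ratio is finite.
\item $f(t\mathbf{x})=tf(\mathbf{x})$ for $t\ge0$.
\end{itemize}
Hence $t^{-1}f(t\mathbf{x})=f(\mathbf{x})$ for every $t>0$. Your blow-up at the origin returns $f$ itself, and so does the blow-down: centering at a fixed diagonal profile changes nothing as $t\to\infty$, by the Lipschitz bound. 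But $f$ is not translation-equivariant. The reports $(1,1),(2,2),(3,3)$ give output $(1,1)$, while their translates by $(-10,-10)$ give $(-7,-7)$. Your aim of showing that the limiting menus are cones is also off target, since even CM has bounded box menus.

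\textbf{The missing idea is to translate to infinity along one well-chosen direction}, $\bar f(\mathbf{x})=\lim_{t\to\infty}\bigl(f(\mathbf{x}+tu)-tu\bigr)$. In the example, $u=(1,1)$ yields the coordinatewise minimum, which is equivariant. What must be proved is that, for a suitable $u\ne0$ (a common exposed direction), the limit $\lim_{t\to\infty}\bigl(tu+v-f(\mathbf{x}+tu+v)\bigr)$ exists for every profile as a genuine limit, not just along a subsequence, and is independent of the offset $v\in\mathbb{R}^2$. The paper imports this from Jia (Theorem 7.1 and Appendix A). There it is established on a countable dense set of profiles via a bounded monotone displacement map, then extended by the Lipschitz bound. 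Offset independence is exactly $\bar f(\mathbf{x}+v)=\bar f(\mathbf{x})+v$. Your Arzel\`a--Ascoli limit along an arbitrary sequence $v_m$ has neither full convergence nor control of transverse offsets, which is why it fails.

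Once equivariance holds, homogeneity needs no separate argument. For $y=\bar f(\mathbf{x})$, radial invariance and equivariance give $y=\bar f(s\mathbf{x}+(1-s)y)=\bar f(s\mathbf{x})+(1-s)y$, so $\bar f(s\mathbf{x})=sy$. Your Step 2 therefore targets the easy property, and leaves the hard one (equivariance) as an unverified recession-cone plan.
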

\begin{proof}
The preceding observations give continuity and unanimity.
We use the construction in Jia~\cite[Theorem 7.1 and Appendix A]{jia2026aggregation}.
The limit construction uses unanimity, continuity, and the projection and radial-invariance properties above.
In Jia's proof, anonymity is verified separately as a property inherited by the limit, so the same construction applies to the mechanisms considered here.

The construction gives a direction $u\ne0$ such that, for every profile $\mathbf{x}$, the limit
\[
    L(\mathbf{x})=\lim_{t\to\infty}\bigl(tu+v-f(\mathbf{x}+tu+v)\bigr)
\]
exists and is independent of the fixed offset $v\in\mathbb{R}^2$.
Appendix A first establishes this conclusion on a countable dense set of profiles, using a bounded monotone displacement map and a common exposed direction.
The uniform Lipschitz bound then extends convergence to every profile, locally uniformly.
Consequently, $f_t(\mathbf{x})=f(\mathbf{x}+tu)-tu$ converges to a continuous mechanism $\bar f(\mathbf{x})=-L(\mathbf{x})$.
Offset independence gives
\[
    \bar f(\mathbf{x}+v) =\lim_{t\to\infty}\bigl(f(\mathbf{x}+v+tu)-tu\bigr) =v-L(\mathbf{x})=\bar f(\mathbf{x})+v.
\]

Each $f_t$ is a translated copy of $f$, so it is SP and has the same approximation guarantee.
Passing to the limit in the incentive inequality for each fixed true report and deviation preserves SP.
Likewise, for $\rho=\operatorname{AR}(f)$,
\[
    \operatorname{SC}(\mathbf{x},f_t(\mathbf{x}))\le\rho\OPT_1(\mathbf{x}) \quad\Longrightarrow\quad \operatorname{SC}(\mathbf{x},\bar f(\mathbf{x}))\le\rho\OPT_1(\mathbf{x}).
\]
Finally, put $y=\bar f(\mathbf{x})$.
Radial invariance and translation equivariance give
\[
    y=\bar f(s\mathbf{x}+(1-s)y)=\bar f(s\mathbf{x})+(1-s)y \qquad(s\ge0),
\]
so $\bar f(s\mathbf{x})=sy$, as in~\cite[Corollary 2.4]{jia2026aggregation}.
\end{proof}

To group agents with identical reports, we use \emph{partial group strategyproofness}: agents at a common true location cannot reduce their common cost through a joint misreport.
Lu et al.~\cite[Definition 2.3 and Lemma 2.1]{lu10mechanism} showed that individual SP implies this property for general $k$-facility games.
For a set $S\subseteq N$, the notation $a^S$ means that each member of $S$ reports $a$, with identities retained.
The induced rule remains SP with respect to the common report of a group.
To see this, change its members' reports from $a$ to $b$ one at a time.
At each step, the changing agent's SP constraint, evaluated at the true location $a$, says that the distance from $a$ to the output cannot decrease.
Chaining these inequalities proves the claim.
Continuity and the two invariances in Lemma~\ref{lem:single-normalization} are also preserved under grouping.

The next lemma restricts the mechanism on profiles supported at two locations.
For each fixed partition of the agents, the same group must determine the output at every pair of locations.

\begin{lemma}\label{lem:single-binary}
Let $f$ be a continuous, translation-equivariant, positively homogeneous SP mechanism in the Euclidean plane with $\operatorname{AR}(f)<\sqrt{2}$.
For every nonempty proper subset $S\subsetneq N$, either $f(a^S,b^{N\setminus S})=a$ for all $a,b\in\mathbb{R}^2$, or $f(a^S,b^{N\setminus S})=b$ for all such $a,b$.
\end{lemma}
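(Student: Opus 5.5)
Fix a nonempty proper subset $S$ and consider the induced two-group rule $g(a,b)=f(a^S,b^{N\setminus S})$. I would show that $g$ always picks one of its two arguments, and that the choice cannot depend on $(a,b)$.

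\textbf{Reduction to a one-variable map.} By translation equivariance, $g(a,b)=a+g(0,b-a)$. So it suffices to study $h(w)=g(0,w)$ for $w\in\mathbb{R}^2$. Positive homogeneity gives $h(tw)=th(w)$ for $t\ge0$, and $h(0)=0$. Hence $h$ is determined by its values on the unit circle, and it is continuous there.

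\textbf{$h(w)\in\{0,w\}$.} Fix $w\neq0$ and write $y=h(w)$. I use that the grouped rule is SP for each group's common report, with grouped menus that are closed convex sets on which the output is the Euclidean projection (the grouping remark preceding the lemma).
\begin{itemize}
\item Radial invariance~\eqref{eq:single-radial}, applied to the grouped rule, lets group $S$ move its report from $0$ to $y+t(0-y)$ and group $N\setminus S$ move from $w$ to $y+s(w-y)$, for any $t,s\ge0$, while the output stays $y$.
\item Suppose $y$ does not lie on the segment $[0,w]$. I would choose $t,s$ so that the two new reports $p=(1-t)y$ and $q=y+s(w-y)$ lie on rays from $y$ in directions $-y$ and $w-y$. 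These directions are not opposite.
\item The optimal cost of the new profile is at most $\min(|S|,|N\setminus S|)\,\|p-q\|$, while the mechanism's cost is $|S|\,\|p-y\|+|N\setminus S|\,\|q-y\|$.
\item Adjust $t,s$ so the triangle $p,y,q$ has its angle at $y$ small or chosen appropriately. This should force a ratio of at least $\sqrt2$, contradicting $\operatorname{AR}(f)<\sqrt2$.
\end{itemize}
Hence $y$ lies on the segment, say $y=\lambda w$ with $\lambda\in[0,1]$. With two point masses, any interior point has cost $\|w\|(\lambda|S|+(1-\lambda)|N\setminus S|)$. This is not immediately excluded, so I need a genuinely two-dimensional test.

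\textbf{Excluding interior $\lambda$ — the main obstacle.} I would use the grouped menu of $N\setminus S$ with $S$ fixed at $0$. This menu is a closed convex cone $C$ (by homogeneity) containing the curve $\{h(w)\}$, and $h(w)=P_C(w)$. If $h(w)=\lambda w$ with $0<\lambda<1$, then the cone $C$ cannot contain $w$, and $\lambda w$ is the projection of $w$ onto $C$. But $\lambda w\in C$ and $C$ is a cone, so $w\in C$ (scaling by $1/\lambda$), and then $P_C(w)=w$, a contradiction. Therefore $\lambda\in\{0,1\}$, provided I first verify that the grouped menu is a cone. That follows from homogeneity: with $S$ at $0$, scaling preserves the profile class. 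This gives $h(w)\in\{0,w\}$. The genuinely hard part is then the earlier step of excluding off-segment outputs, where the $\sqrt2$ threshold enters via a right-angle configuration. Note that if $C$ is a cone and the output is the projection onto it, off-segment outputs occur exactly when $C$ is a proper cone. So I would instead show directly that $C$ must be $\{0\}$ or all of $\mathbb{R}^2$. If $C$ were a proper closed convex cone other than these, take $w$ just outside $C$ near its boundary ray $r$. The output then lies on $r$ at a right angle to $w$. In the two-point profile $(0^S,w^{N\setminus S})$, if the sizes are balanced, the mechanism's cost compared with the optimum approaches $\sqrt2$ or more. I would scale the group sizes using a further radial shift of the $S$ report to make the bound $\ge\sqrt2$ regardless of $|S|$.

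\textbf{Uniformity.} With $C=\{0\}$ we get $h\equiv0$, and with $C=\mathbb{R}^2$ we get $h=\mathrm{id}$. By translation, this yields $g(a,b)=a$ for all $a,b$ or $g(a,b)=b$ for all $a,b$. Alternatively, continuity of $h$ on the connected unit circle, together with $h(w)\in\{0,w\}$ there, forces a single choice globally.
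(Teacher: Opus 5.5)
Your final argument is essentially the paper's proof: with $s=|S|$, the grouped menu $C$ (group $S$ fixed at $0$) is a closed convex cone with output $P_C$, and if $C\ne\{0\},\mathbb{R}^2$, a point $w$ near a nonzero boundary point of $C$ gives nonzero orthogonal vectors $P_Cw$ and $w-P_Cw$; radially rescaling one group's report so that the two legs are proportional to $s$ and $n-s$ then yields ratio $\sqrt{s^2+(n-s)^2}/\min\{s,n-s\}\ge\sqrt{2}$. Drop the earlier heuristics, which are wrong as stated: radial moves cannot change the angle at $y$ (it is exactly a right angle because $C$ is a cone and $y=P_Cw$), and for $w$ just outside $C$ the unrescaled ratio is near $s/\min\{s,n-s\}$ rather than $\sqrt{2}$, so the rescaling step is essential, not optional.
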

\begin{proof}
Set $h(a,b)=f(a^S,b^{N\setminus S})$ and $s=|S|$.
The projection property applies to this two-group SP rule.
Its range $K=\{h(a,0):a\in\mathbb{R}^2\}$ is closed and convex by the projection property.
Positive homogeneity gives $h(ta,0)=t h(a,0)$ for $t\ge0$, so $K$ is a cone.
Translating both reports by $-b$ now gives
\[
    h(a,b)=b+P_K(a-b).
\]
Suppose that $K$ is neither $\{0\}$ nor $\mathbb{R}^2$.
There exists $z$ for which $p=P_Kz$ and $q=z-p$ are both nonzero.
Otherwise every point would belong to either $K$ or its polar cone $K^\circ=\{v:\langle v,w\rangle\le0\text{ for all }w\in K\}$.
These are closed cones with intersection $\{0\}$, and both contain nonzero points, contradicting the connectedness of $\mathbb{R}^2\setminus\{0\}$.
The projection inequality, tested at $0$ and $2p$, gives $\langle p,q\rangle=0$.

Since $h(z,0)=p$, translation gives $h(q,-p)=0$.
Set $e_1=\frac{q}{\|q\|}$ and $e_2=-\frac{p}{\|p\|}$.
These are orthogonal unit vectors.
Scaling each group's report about the output zero gives $h(re_1,te_2)=0$ for all $r,t>0$ by radial invariance.
The mechanism cost on this profile is $sr+(n-s)t$.
For two reported locations, the optimum places the facility at a location with at least as many agents, and its cost is $\min\{s,n-s\}\sqrt{r^2+t^2}$.
It follows that
\[
    \operatorname{AR}(f) \ge\sup_{r,t>0}\frac{sr+(n-s)t}{\min\{s,n-s\}\sqrt{r^2+t^2}} =\frac{\sqrt{s^2+(n-s)^2}}{\min\{s,n-s\}} \ge\sqrt{2},
\]
a contradiction.
Thus $K=\{0\}$ or $K=\mathbb{R}^2$, giving the two alternatives.
\end{proof}

Call $S$ \emph{winning} if the first alternative of Lemma~\ref{lem:single-binary} holds.
Exactly one of $S$ and $N\setminus S$ is winning.
Take a winning set $T$ of minimum cardinality $t$.
For $n\ge3$, a singleton cannot be winning: placing it at $a$ and all other agents at $b\ne a$ would give ratio $n-1\ge2$.
A partition into two sets of sizes $\lfloor n/2\rfloor$ and $\lceil n/2\rceil$ shows that $t\le\lceil n/2\rceil$.
Split $T$ into two nonempty sets $A,B$, and put $C=N\setminus T$.
Minimality of $T$ makes $A$ and $B$ losing, and $C$ is losing because its complement $T$ is winning.
The complement of each losing group is winning, so every union of two groups is winning.
The three-group rule $H(a,b,c)=f(a^A,b^B,c^C)$ therefore satisfies
\begin{equation}\label{eq:single-majority}
    H(a,a,b)=H(a,b,a)=H(b,a,a)=a.
\end{equation}
We will show that these majority identities force $H$ to agree with CM in a fixed coordinate system on a family of profiles.
We first show that the derivative matrices with respect to the agents' reports share an orthonormal eigenbasis.

\begin{lemma}\label{lem:single-commuting}
Let $F:(\mathbb{R}^2)^m\to\mathbb{R}^2$ be continuous and SP.
At almost every profile, its derivative blocks $A_i=D_iF$ are symmetric positive semidefinite contractions and commute pairwise: $A_iA_j=A_jA_i$.
\end{lemma}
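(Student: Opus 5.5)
The plan is to reduce everything to the projection structure already recorded above and then read off the derivative blocks. Since $F$ is continuous and SP, for each $i$ and fixed $\mathbf{x}_{-i}$ we have $F(a,\mathbf{x}_{-i})=P_{M_i(\mathbf{x}_{-i})}(a)$ with $M_i(\mathbf{x}_{-i})$ closed and convex. This is the cited \cite[Lemma 2.1]{jia2026aggregation}, whose proof does not use anonymity. Hence $F$ is $1$-Lipschitz in each report and thus jointly Lipschitz. By Rademacher's theorem, $F$ is differentiable at almost every profile, with blocks $A_i=D_iF$.

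\textbf{Symmetry, positivity and contraction.} Projections onto closed convex sets are gradients of convex functions: $P_C=\nabla\phi_C$, where $\phi_C(a)=\tfrac12\|a\|^2-\tfrac12\operatorname{dist}(a,C)^2$ is convex with $1$-Lipschitz gradient. Fix a point of full differentiability of $F$. Then $a\mapsto F(a,\mathbf{x}_{-i})$ is differentiable at $x_i$ with derivative $A_i$. A differentiable gradient of a $C^{1,1}$ convex function has a symmetric derivative at that point, because its derivative there is the Alexandrov Hessian. Alternatively, use firm nonexpansiveness directly: $\langle P a-P b,a-b\rangle\ge\|Pa-Pb\|^2$. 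Linearizing gives $\langle A_iv,v\rangle\ge\|A_iv\|^2$ for all $v$. This yields positive semidefiniteness and $\|A_i\|\le1$. For symmetry I would rely on the gradient representation: a monotone gradient map has symmetric derivative wherever it is differentiable. I expect to quote this as a standard fact, or to check it by integrating along small loops.

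\textbf{Commutation.} This is the main obstacle. The idea is a mixed-partials argument applied to the potential structure across agents. Consider agents $i\ne j$ and the map $(a,b)\mapsto F(a,b,\mathbf{x}_{-ij})$.

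First try exploiting that each slice is a projection, so $A_i$ is a projection-derivative onto a planar convex set. In $\mathbb{R}^2$, at almost every point the derivative of $P_C$ is either $I$ (interior), $0$ (vertex or normal cone region), or the orthogonal projection onto the tangent line of $\partial C$ scaled by a curvature factor, i.e.\ $\frac{1}{1+\kappa r}\,\tau\tau^{\top}$. In all cases $A_i$ has eigenvector $\tau_i$ tangent to the boundary at $F(\mathbf{x})$ and eigenvector $\nu_i$ normal to it. Thus commutation reduces to showing that whenever both $A_i$ and $A_j$ are rank one, their tangent directions coincide or are orthogonal; if either is $0$ or $I$, commutation is trivial.

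To get this, I would use the SP constraint of agent $j$ while agent $i$ moves. For $i$'s moves, the output travels along $\operatorname{range}A_i=\tau_i$ to first order. Agent $j$'s menu $M_j$ depends on $x_i$, and $F$ must stay the projection of $x_j$ onto the moved menu. Equality of mixed second-order behaviour, $D_j(D_iF)=D_i(D_jF)$ in a weak or distributional sense for the Lipschitz $F$, then forces compatibility. I would pass to almost every point via the distributional symmetry of mixed derivatives. The condition $A_iA_j=A_jA_i$ should then follow from the symmetric blocks being simultaneously diagonalizable, which is where I am least certain and would spend the effort. A fallback is to use radial invariance \eqref{eq:single-radial}: scaling $x_i-y$ keeps the output fixed. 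This forces $A_i(x_i-y)=0$ whenever $A_i\ne I$, so the kernel of each nontrivial block contains $x_i-y$, and the orthogonality structure can be pinned down directly from the geometry of normal cones at the common output $y$.
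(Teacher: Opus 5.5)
Your treatment of symmetry, positive semidefiniteness and the contraction bound matches the paper: the unilateral response is the gradient of the convex function $\frac12\|a\|^2-\frac12\operatorname{dist}(a,M)^2$, it is firmly nonexpansive, and Rademacher's theorem gives almost-everywhere differentiability.

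\textbf{The gap: commutation is never proved.} Your reduction to ``the rank-one tangent directions coincide or are orthogonal'' only restates commutation for symmetric $2\times2$ blocks. Likewise, appealing to simultaneous diagonalizability is circular, since for symmetric matrices that is equivalent to commuting.

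The mixed-partials idea gives nothing. A Lipschitz $F$ has no pointwise second derivatives to compare. The distributional identity $D_jD_iF=D_iD_jF$ holds for every Lipschitz map, so it carries no information about the product $A_iA_j$ of first-derivative blocks.

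Your fallback uses radial invariance only to first order at the same profile, giving $A_i(x_i-y)=0$. That identifies each kernel but relates no two blocks. Static normal-cone geometry at $y$ cannot relate them either. The menus $M_i(\mathbf{x}_{-i})$ and $M_j(\mathbf{x}_{-j})$ are different sets that need only share the point $y$. At a single profile their normals $x_i-y$ and $x_j-y$ may form any angle. The constraint has to come from comparing derivatives at two different profiles.

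\textbf{The missing idea.} Use radial invariance as a finite move, then apply the symmetry you have already proved at the new profile. Let $R_k$ replace $x_k$ by $2x_k-F(\mathbf{x})$, leaving the other reports fixed.
\begin{itemize}
\item Radial invariance gives $F(R_k\mathbf{x})=F(\mathbf{x})$.
\item $R_k$ is bi-Lipschitz, with inverse $x_k\mapsto\frac{x_k+F(\mathbf{x})}{2}$. It therefore preserves null sets, so for almost every $\mathbf{x}$ both $\mathbf{x}$ and $R_k\mathbf{x}$ are differentiability points.
\item Write $B_j=D_jF(R_k\mathbf{x})$. The chain rule gives $B_k(2I-A_k)=A_k$ and $B_j=(I+B_k)A_j$ for $j\ne k$.
\item Hence $I+B_k=(I-\frac12A_k)^{-1}$ and $B_j=(I-\frac12A_k)^{-1}A_j$. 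The inverse exists because the eigenvalues of $A_k$ lie in $[0,1]$.
\item $B_j$ is symmetric, being a derivative block at a differentiability point. Transposing gives $(I-\frac12A_k)^{-1}A_j=A_j(I-\frac12A_k)^{-1}$, which rearranges to $A_kA_j=A_jA_k$.
\end{itemize}
Without a two-profile comparison of this kind, your argument does not establish commutation.
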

\begin{proof}
Each unilateral response is a convex projection.
Such a projection is the gradient of the convex function $a\mapsto\frac{1}{2}\|a\|^2-\frac{1}{2}\operatorname{dist}(a,M)^2$ and is firmly nonexpansive.
Its derivative is therefore symmetric and satisfies $0\preceq A_i\preceq I$ wherever it exists~\cite[Lemma A.1]{jia2026aggregation}.
Here $A\preceq B$ means that $B-A$ is positive semidefinite.
The mechanism is Lipschitz, so Rademacher's theorem gives differentiability almost everywhere.

For pairwise commutation, we adapt the differentiation argument in~\cite[Appendix A.2]{jia2026aggregation} by moving a single report radially.
Fix $k$, and let $R_k$ replace report $x_k$ by $2x_k-F(\mathbf{x})$, leaving the other reports fixed.
Radial invariance gives $F(R_k\mathbf{x})=F(\mathbf{x})$.
Moreover, $R_k$ is bi-Lipschitz: its inverse replaces $x_k$ by $\frac{x_k+F(\mathbf{x})}{2}$.
A bi-Lipschitz map and its inverse preserve sets of measure zero.
Removing the nondifferentiability set of $F$ and its preimages under the finitely many maps $R_k$, we can differentiate at both $\mathbf{x}$ and every $R_k\mathbf{x}$.
Writing $B_j=D_jF(R_k\mathbf{x})$, the chain rule gives
\[
    B_k(2I-A_k)=A_k,\qquad B_j=(I+B_k)A_j\quad(j\ne k).
\]
Hence $I+B_k=(I-\frac{1}{2} A_k)^{-1}$.
The inverse exists because the eigenvalues of $A_k$ lie in $[0,1]$.
Since $B_j$ is symmetric, transposing $B_j=(I-\frac{1}{2}A_k)^{-1}A_j$ gives
\[
    \left(I-\frac{1}{2} A_k\right)^{-1}A_j=A_j\left(I-\frac{1}{2} A_k\right)^{-1}.
\]
Taking the inverse and rearranging yields $A_kA_j=A_jA_k$.
\end{proof}

\begin{lemma}\label{lem:single-three-groups}
Let $H:(\mathbb{R}^2)^3\to\mathbb{R}^2$ be continuous, translation-equivariant, and SP, and satisfy~\eqref{eq:single-majority}.
There exist orthogonal unit vectors $e_1,e_2$ such that $H(-re_1,re_1,e_2)=0$ for every $r>0$.
\end{lemma}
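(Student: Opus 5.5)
The plan is to use Lemma~\ref{lem:single-commuting} to produce a common orthonormal eigenbasis, and then show that along the family $(-re_1,re_1,e_2)$ the output stays at $0$ by a continuity and derivative argument.

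First, by translation equivariance, $H(x+v)=H(x)+v$. Differentiating in $v$ gives $A_1+A_2+A_3=I$ at every point of differentiability. By Lemma~\ref{lem:single-commuting}, the blocks $A_i$ are symmetric, commute pairwise, and satisfy $0\preceq A_i\preceq I$. So at almost every profile they are simultaneously diagonalizable in an orthonormal basis. Their eigenvalues $\lambda_i^{(1)},\lambda_i^{(2)}\in[0,1]$ sum to $1$ in each eigendirection.

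Next I would study the majority profile $(a,b,a)$ near $a=0$, $b=e$. Here $H(0,e,0)=0$ for every $e$ by~\eqref{eq:single-majority}. By radial invariance~\eqref{eq:single-radial}, moving the first and third reports radially from $0$ keeps the output at $0$. I then want to locate a profile of the form $(-re_1,re_1,e_2)$ whose output is $0$. I would start from the family $x(r)=(-re_1,re_1,e_2)$ for a unit vector $e_1$ to be chosen and $e_2\perp e_1$. The idea is a degree/continuity argument on the direction $e_1\in S^1$.
\begin{itemize}
\item For fixed small $r$, the map $\phi(\theta)=H(-re_1(\theta),re_1(\theta),e_2(\theta))$ is continuous.
\item As $r\to0$ it tends to $H(0,0,e_2)=0$ by~\eqref{eq:single-majority} and continuity.
\item Its first-order term in $r$ is $r(A_2-A_1)e_1$, with the derivatives taken at $(0,0,e_2)$.
\end{itemize}
More robustly, I would use the projection characterization. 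The output is $0$ exactly when $-re_1-0$ lies in the normal cone of the menu of group~1 at $0$, and similarly for groups~2 and~3. Since $H(0,0,\cdot)=0$ for all third reports and $H(\cdot,0,0)=0$ as well, the menus of groups~1 and~2 at $0$ contain supporting directions. I would pick $e_1$ to be a normal direction of the closed convex menu $M_1$ of group~1 at the relevant profile, with $e_2$ orthogonal to it. Then I would use homogeneity-free radial invariance to propagate from one $r$ to all $r>0$: once $H(-r_0e_1,r_0e_1,e_2)=0$, scaling reports $1,2$ about the output $0$ gives every $r>0$ by~\eqref{eq:single-radial}. This last step is routine, so the whole task reduces to finding one $r_0>0$ and a suitable orthogonal pair.

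The main obstacle is the existence step: showing that for some direction the output at $(-re_1,re_1,e_2)$ is exactly $0$, and not merely $O(r^2)$. My first attempt is a topological argument on $\theta\mapsto e_1(\theta)$ over a half-circle. Replacing $e_1$ by $-e_1$ swaps the roles of groups $1$ and $2$. Using the commuting eigenbasis, I would show that the component of $\phi(\theta)$ along $e_1$ changes sign between $\theta$ and $\theta+\pi/2$, so by the intermediate value theorem it vanishes at some $\theta_0$. I would then show that the $e_2$-component is forced to be $0$ there. For this I would use SP of group~3 at true location $e_2$ together with the majority identity $H(a,b,a)=a$: group~3 cannot move the output off the segment toward $e_2$ past what the others allow. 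I would then combine this with the projection inequality $\langle x_i-y,z-y\rangle\le0$ applied to groups $1$ and $2$ to conclude $y=0$.
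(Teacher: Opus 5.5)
There is a genuine gap at the existence step you flag as the main obstacle, and the intermediate-value strategy cannot close it. Your final reduction from one $r_0$ to all $r>0$ by radial invariance is fine. The trouble is that, for fixed $r$, $\phi(\theta)=0$ is two scalar equations in one unknown, and a zero of the $e_1$-component need not be a zero of $\phi$. For a concrete counterexample, take $H$ to be the coordinate-wise median in the standard axes, which satisfies every hypothesis of the lemma. Let $e_1=(\cos\theta,\sin\theta)$, $e_2=(-\sin\theta,\cos\theta)$, $0<r<1$, and $\theta=\pi/4$. The reports are $-\frac{r}{\sqrt2}(1,1)$, $\frac{r}{\sqrt2}(1,1)$ and $\frac{1}{\sqrt2}(-1,1)$, and the output is $\frac{r}{\sqrt2}(-1,1)=re_2\neq0$. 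Near $\pi/4$ the $e_1$-component equals $-r\cos2\theta$, so it changes sign there, just as it does at the correct angle $\theta=0$. An intermediate-value argument cannot tell these zeros apart. Since this $H$ is admissible, no argument at a zero of the $e_1$-component (SP of group~3, the projection inequalities, or anything else) can force $\langle y,e_2\rangle=0$. Indeed the obtuse constraints $\langle x_i-y,x_j-y\rangle\le0$ only give $\langle y,e_2\rangle\in[0,r]$, and both endpoints occur. Two further points: replacing $e_1$ by $-e_1$ yields no relation between values of $H$, since $H$ is not assumed symmetric in the groups (the lemma is meant for non-anonymous rules). Also, your first-order expansion at $(0,0,e_2)$ presumes differentiability at a profile where even CM is not differentiable.

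What is missing is a global rigidity statement that fixes one frame for all profiles. The paper shows that $H$ coincides with CM in a fixed orthonormal frame on each orientation component of acute triangles, in four steps. First, the majority identities put both other reports in each group's menu, which gives $\langle r_i,r_j\rangle\le0$ for the residuals $r_i=x_i-H(\mathbf{x})$; these residuals are nonzero on acute triangles. Second, radial invariance gives $A_ir_i=0$, so each block has rank at most one. Together with $A_1+A_2+A_3=I$ and Lemma~\ref{lem:single-commuting}, this forces a perpendicular residual pair almost everywhere, hence everywhere by continuity. Third, the identities $D_ir_i=P_{r_i}$, $D_jr_i=-\lambda P_{r_i}$, $D_kr_i=-(1-\lambda)P_{r_i}$ for the distinguished residual show that its direction has zero derivative almost everywhere. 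So the frame is locally constant, hence constant on each connected component. Fourth, with $e_1,e_2$ positively oriented axes of that frame, $(-r_0e_1,r_0e_1,e_2)$ with $r_0<1$ is an acute triangle in the right component whose coordinate-wise median is $0$, and your scaling step finishes. Your proposal uses Lemma~\ref{lem:single-commuting} only pointwise and never shows that the eigenbasis is independent of the profile, which is the heart of the matter.
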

\begin{proof}
At a profile $\mathbf{x}$, write $y=H(\mathbf{x})$ and $r_i=x_i-y$.
For example, agent $1$ can induce $x_2$ by reporting $x_2$ and can induce $x_3$ by reporting $x_3$, by~\eqref{eq:single-majority}.
Thus each agent's menu contains both other reports.
The projection inequality implies
\begin{equation}\label{eq:single-obtuse}
    \langle r_i,r_j\rangle\le0\qquad(i\ne j).
\end{equation}
Consider the open set of profiles forming acute triangles.
Here every $r_i$ is nonzero: if $y$ were a vertex, the other two residual vectors would form an acute angle, contradicting~\eqref{eq:single-obtuse}.

At almost every such profile, let $A_i=D_iH$.
Translation equivariance and radial invariance imply, respectively,
\[
    A_1+A_2+A_3=I,\qquad A_ir_i=0\quad(i=1,2,3).
\]
Each $A_i$ has rank at most one.
By Lemma~\ref{lem:single-commuting}, these matrices are simultaneously diagonalizable in an orthonormal basis.
Their sum is $I$, so two nonzero blocks must act on different axes.
Their kernel directions, and hence the corresponding residual vectors, are perpendicular.
The differentiability points have full measure and are dense in the acute-profile domain.
At each such point one of the three residual pairs is perpendicular.
Approaching any acute profile along a sequence and retaining a subsequence with the same pair, continuity preserves its zero inner product.

A perpendicular pair determines an orthogonal frame, with the order and signs of its axes ignored.
This frame is unique at each acute profile.
Indeed, two perpendicular pairs must share an index, and their other vectors are parallel, giving the same frame.
We next show that the axes of this frame are locally constant.

Fix an acute profile.
Choose an index $i$ contained in every perpendicular pair there, and denote the other indices by $j,k$.
Such an index exists because three nonzero vectors in the plane cannot be pairwise perpendicular.
We have $\langle r_j,r_k\rangle<0$, and this remains true in a neighborhood.
Every perpendicular pair in that neighborhood therefore contains $i$.
Thus the direction of $r_i$ determines the frame throughout this neighborhood.
After shrinking the neighborhood if necessary, choose a continuous direction angle $\theta$ for $r_i$.
Since $H$ is Lipschitz and $r_i\ne0$, this angle is locally Lipschitz.

Let $P_{r_i}$ denote projection onto the line spanned by $r_i$, and $P_{r_i^\perp}=I-P_{r_i}$.
At almost every point in this neighborhood, the derivative blocks have the form
\[
    A_i=P_{r_i^\perp},\qquad A_j=\lambda P_{r_i},\qquad A_k=(1-\lambda)P_{r_i},\qquad 0\le\lambda\le1.
\]
To verify this, use the common eigenbasis above.
The nonzero blocks have kernels parallel to their residual vectors.
Since $r_j,r_k$ are not perpendicular, they cannot supply blocks acting on both distinct axes.
In particular, $A_i$ cannot be zero, since then $A_j+A_k=I$ would require $r_j\perp r_k$.
Every other nonzero block acts along $r_i$: a block acting perpendicular to $r_i$ would have its residual parallel to $r_i$, contradicting the choice that every perpendicular pair contains $i$.
Since the blocks sum to $I$, $A_i$ is the full projection perpendicular to $r_i$, while $A_j$ and $A_k$ divide the projection onto $r_i$.
It follows that
\[
    D_i r_i=P_{r_i},\qquad D_j r_i=-\lambda P_{r_i},\qquad D_k r_i=-(1-\lambda)P_{r_i}.
\]
All first-order changes of $r_i$ are parallel to $r_i$, so its direction angle satisfies $D\theta=0$ almost everywhere.
A locally Lipschitz function with zero derivative almost everywhere is constant on a sufficiently small open ball.
The frame is therefore locally constant, and hence constant on each connected component of the acute-profile domain.

There are two such components, distinguished by the orientation of the ordered triangle.
To see connectedness for positive orientation, parameterize the first two vertices as $(0,0)$ and $(1,0)$ after translation, rotation, and positive scaling.
Writing the third vertex as $(u,v)$, the three acute-angle conditions become
\[
    0<u<1,\qquad v>\sqrt{u(1-u)}.
\]
This parameter domain is connected, as are the translation, rotation, and scale parameters.
Fix the frame on the positively oriented component.
At any profile in this component, express a perpendicular residual pair as $(a,0)$ and $(0,b)$ in that frame, and the remaining vector as $(c,d)$.
Equation~\eqref{eq:single-obtuse} gives $ac\le0$ and $bd\le0$.
Thus zero is the median of both residual coordinates, and $H$ equals CM on this component.

Choose unit axes $e_1,e_2$ of this frame with positive orientation.
For $0<r_0<1$, the profile $(-r_0e_1,r_0e_1,e_2)$ forms a positively oriented acute triangle and has coordinate-wise median zero.
Hence $H(-r_0e_1,r_0e_1,e_2)=0$.
Scaling the first two reports about zero using~\eqref{eq:single-radial} gives the claimed identity for every $r>0$.
\end{proof}

\begin{proof}[Proof of Theorem~\ref{thm:planar-optimality}]
Suppose $n\ge3$ and an SP mechanism has ratio strictly below $\alpha_n$.
By Lemma~\ref{lem:single-normalization}, we may assume continuity, translation equivariance, and positive homogeneity without increasing its ratio.
Since $\alpha_n\le\sqrt{2}$, Lemma~\ref{lem:single-binary} applies.
Choose $T,A,B,C$ as above, with $t=|T|\le\lceil n/2\rceil$ and $|C|=n-t$.
Lemma~\ref{lem:single-three-groups} supplies a profile with reports $-re_1$ on $A$, $re_1$ on $B$, and $e_2$ on $C$, on which the mechanism selects zero.
Its social cost is $tr+n-t$.
Placing the facility at $e_2$ serves $C$ at zero cost and each member of $A\cup B$ at distance $\sqrt{r^2+1}$, so $\OPT_1(\mathbf{x})\le t\sqrt{r^2+1}$.
Taking $r=\frac{t}{n-t}$ gives
\[
    \operatorname{AR}(f)
    \ge\frac{tr+n-t}{t\sqrt{r^2+1}}
    =\sqrt{1+\left(\frac{n-t}{t}\right)^2}
    \ge\sqrt{1+\left(\frac{\lfloor n/2\rfloor}{\lceil n/2\rceil}\right)^2}
    =\alpha_n,
\]
a contradiction.

CM is SP because each coordinate of a truthful report is projected onto the interval of attainable coordinate medians.
For odd $n$, its approximation ratio is $\alpha_n$ by Goel and Hann-Caruthers~\cite{GoelH23}.
For even $n$, the ratio $\sqrt{2}$ is also noted in~\cite[Footnote 2]{GoelH23}.
The upper bound follows directly from norm comparison. Let $y^*$ be a Euclidean optimum and write $\operatorname{SC}_p$ for the sum of $\ell_p$ distances.
Coordinate medians minimize $\operatorname{SC}_1$, so
\[
    \operatorname{SC}_2(\mathbf{x},\operatorname{CM}(\mathbf{x}))
    \le\operatorname{SC}_1(\mathbf{x},\operatorname{CM}(\mathbf{x}))
    \le\operatorname{SC}_1(\mathbf{x},y^*)
    \le\sqrt{2}\,\operatorname{SC}_2(\mathbf{x},y^*).
\]
Finally, for $n=2$, the mechanism $f(x_1,x_2)=x_1$ is SP and has social cost $\|x_1-x_2\|$, which is optimal by the triangle inequality.
\end{proof}

\subsection{Exact Optimization in $\ell_\infty$ Spaces}\label{sec:single-endpoints}

In $\ell_\infty^d$, selecting the lexicographically smallest social-cost minimizer gives an anonymous SP mechanism with approximation ratio $1$.
This improves on CM, whose tight uniform approximation ratio in $\ell_\infty^2$ is $2$~\cite{hastings2026strategic}.

Define
\[
    f_{\mathrm{lex}}(\mathbf{x})
    =\operatorname{lexmin}\left(\argmin_{y\in\mathbb{R}^d}\sum_{i=1}^n\|x_i-y\|_\infty\right).
\]
The lexicographic rule first minimizes the first coordinate among all social-cost minimizers, then the second among the remaining points, and so on, in a fixed coordinate order.
The objective is continuous and tends to infinity as $\|y\|_\infty\to\infty$.
Its minimizer set is therefore nonempty and compact, so this successive minimization determines a unique output.

To prove SP, we compare the outputs before and after a possible misreport.
The following exchange lemma moves the outputs towards each other along selected coordinates.
For each unchanged report, the two distance changes sum to at most zero.
We choose the threshold so that the movement towards the second output reduces the manipulating agent's true cost.

\begin{lemma}\label{lem:single-linfty-exchange}
For $y,z\in\mathbb{R}^d$ and $\tau>0$, define
\[
    v_r=\begin{cases}
        \operatorname{sgn}(z_r-y_r),&|z_r-y_r|\ge\tau,\\
        0,&|z_r-y_r|<\tau.
    \end{cases}
\]
For each fixed $x\in\mathbb{R}^d$ and all sufficiently small $\varepsilon>0$,
\begin{equation}\label{eq:single-exchange}
    \|y+\varepsilon v-x\|_\infty+\|z-\varepsilon v-x\|_\infty
    \le\|y-x\|_\infty+\|z-x\|_\infty.
\end{equation}
If $\|a-y\|_\infty>\|a-z\|_\infty$ and $\tau=\|a-y\|_\infty-\|a-z\|_\infty$, then $v\ne0$ and, for all sufficiently small $\varepsilon>0$,
\begin{equation}\label{eq:single-exchange-gain}
    \|y+\varepsilon v-a\|_\infty=\|y-a\|_\infty-\varepsilon.
\end{equation}
\end{lemma}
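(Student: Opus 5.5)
The plan is to prove \eqref{eq:single-exchange} by a local analysis of the $\ell_\infty$ norm near $y-x$ and $z-x$, using its piecewise-linear structure. For small $\varepsilon$, the value $\|y+\varepsilon v-x\|_\infty$ is governed only by the coordinates attaining the maximum in $\|y-x\|_\infty$. Write $I_y=\{r:|y_r-x_r|=\|y-x\|_\infty\}$ and define $I_z$ analogously. When $\|y-x\|_\infty>0$, for small $\varepsilon$ we have
\[
\|y+\varepsilon v-x\|_\infty=\|y-x\|_\infty+\varepsilon\max_{r\in I_y}\operatorname{sgn}(y_r-x_r)v_r .
\]
When $y=x$, the left side is $\varepsilon\|v\|_\infty$. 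The same holds with $z$ and $-v$. It therefore suffices to show that the two first-order terms sum to at most $0$, that is,
\[
\max_{r\in I_y}\sigma_r v_r+\max_{s\in I_z}(-\tau_s v_s)\le 0,
\]
where $\sigma_r=\operatorname{sgn}(y_r-x_r)$ and $\tau_s=\operatorname{sgn}(z_s-x_s)$, with the degenerate cases $y=x$ or $z=x$ handled separately.

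The key step is a case analysis on the signs, and I expect it to be the main obstacle. Each term lies in $\{-1,0,1\}$, so the task is to exclude configurations in which one term equals $1$ and the other is $\ge 0$.

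Suppose $\sigma_r v_r=1$ for some $r\in I_y$. Then $|z_r-y_r|\ge\tau$, and $y_r-x_r$ has the same sign as $z_r-y_r$. Hence $x_r$ lies on the far side of $y_r$ from $z_r$, and $|z_r-x_r|=|y_r-x_r|+|z_r-y_r|>\|y-x\|_\infty$. It follows that $\|z-x\|_\infty>\|y-x\|_\infty$.

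Now suppose also that $-\tau_s v_s\ge 0$ for some $s\in I_z$. If $v_s=0$, then $|z_s-y_s|<\tau$. If $-\tau_s v_s=1$, then symmetrically $\|y-x\|_\infty>\|z-x\|_\infty$, a contradiction. In the remaining subcase $v_s=0$, the triangle inequality gives
\[
\|z-x\|_\infty=|z_s-x_s|\le |y_s-x_s|+|z_s-y_s|<\|y-x\|_\infty+\tau .
\]
This by itself is not contradictory, because $\tau$ is arbitrary in the first part. I will instead strengthen the first-order claim using the exact identity above: in the one-positive case, coordinate $r$ gives $\|z-x\|_\infty\ge\|y-x\|_\infty+|z_r-y_r|\ge\|y-x\|_\infty+\tau$, which contradicts the strict bound just derived. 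The zero cases, where $\sigma_r v_r=0$ and the other term is at most $0$, give a sum of at most $0$ immediately. The cases $y=x$ or $z=x$ are handled by the same comparison, or directly by the triangle inequality. This completes the first part.

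For \eqref{eq:single-exchange-gain}, write $D=\|a-y\|_\infty>0$. Take any $r\in I_y$ computed at $x=a$. Then $\|a-z\|_\infty\ge|a_r-z_r|\ge D-|z_r-y_r|$, so $|z_r-y_r|\ge D-\|a-z\|_\infty=\tau$. Thus $v_r\ne0$, and in particular $v\ne0$. Moreover, $z_r$ must lie on the same side of $y_r$ as $a_r$. Otherwise $|a_r-z_r|=D+|z_r-y_r|>D>\|a-z\|_\infty$, which is impossible. Hence $\operatorname{sgn}(y_r-a_r)v_r=-1$ for every maximizing coordinate $r$, and the local formula gives $\|y+\varepsilon v-a\|_\infty=D-\varepsilon$ for small $\varepsilon$, since the nonmaximizing coordinates stay strictly below $D$.
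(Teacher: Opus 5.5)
Your proposal is correct and is essentially the paper's proof. Both arguments reduce \eqref{eq:single-exchange} to one-sided rates in $\{-1,0,1\}$ (your exact local formula over $I_y,I_z$), show that an outward-moving maximizing coordinate at $y$ forces $\|z-x\|_\infty\ge\|y-x\|_\infty+\tau$, and conclude that every maximizing coordinate at $z$ is then selected and moves inward; you argue this contrapositively by excluding $v_s=0$ and $-\tau_sv_s=1$, the mirror case follows by swapping $y,z$ and $v,-v$, and your argument for \eqref{eq:single-exchange-gain} matches the paper's line by line. The blemishes are cosmetic: the degenerate cases $y=x$ or $z=x$ follow from your coordinate comparison but not from the triangle inequality alone, and the letter $\tau_s$ for $\operatorname{sgn}(z_s-x_s)$ clashes with the threshold $\tau$.
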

\begin{proof}
The claim is immediate if $v=0$, so assume otherwise.
Let $A$ and $B$ be the right directional derivatives of the two distances in~\eqref{eq:single-exchange}.
Each belongs to $\{-1,0,1\}$, since every coordinate moves at speed zero or one and the norm is the maximum of the absolute coordinate differences.
We show that $A=1$ implies $B=-1$.
Write $r_y=\|y-x\|_\infty$ and $r_z=\|z-x\|_\infty$.
If $r_y>0$ and $A=1$, a coordinate attaining $r_y$ moves away from $x$ and is selected by the threshold rule.
In that coordinate the distance at $z$ is at least $r_y+\tau$, so $r_z\ge r_y+\tau$.
If $r_y=0$, the same inequality follows from $x=y$ and $v\ne0$.
For every coordinate $r$ attaining $r_z$,
\[
    |z_r-y_r|\ge |z_r-x_r|-|y_r-x_r|\ge r_z-r_y\ge\tau.
\]
All such coordinates are selected, and movement from $z_r$ towards $y_r$ reduces their distance to $x_r$ at rate one.
Thus $B=-1$.
Interchanging $y,z$ proves that $B=1$ implies $A=-1$, whence $A+B\le0$ in all cases.
Each distance is piecewise linear along the indicated movement, so this derivative inequality gives~\eqref{eq:single-exchange} for sufficiently small positive $\varepsilon$.

For the final assertion, put $R=\|a-y\|_\infty$, so $\|a-z\|_\infty=R-\tau$.
Every coordinate attaining $R$ satisfies $|a_r-y_r|=R$ and $|a_r-z_r|\le R-\tau$.
Hence $|z_r-y_r|\ge\tau$, and the direction from $y_r$ to $z_r$ points towards $a_r$.
Thus all coordinates attaining the maximum distance at $y$ move inwards at rate one.
For sufficiently small positive $\varepsilon$, the other coordinates do not overtake them.
This proves~\eqref{eq:single-exchange-gain} and also shows $v\ne0$.
\end{proof}

\begin{theorem}\label{thm:single-linfty}
For every $n\ge2$ and $d\ge2$, $f_{\mathrm{lex}}$ is an anonymous SP mechanism with approximation ratio $1$ in $\ell_\infty^d$.
\end{theorem}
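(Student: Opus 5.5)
Anonymity and ratio $1$ are immediate from the definition of $f_{\mathrm{lex}}$. The social-cost objective $\sum_i\|x_i-y\|_\infty$ is symmetric in the reports, so its minimizer set, and hence its lexicographic minimum, is unchanged by permuting them. The selected point lies in the argmin, so its social cost equals $\OPT_1(\mathbf{x})$. The substance is SP, which I would prove by contradiction using Lemma~\ref{lem:single-linfty-exchange}.

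Fix a profile $\mathbf{x}$, an agent $i$ with true location $a=x_i$, and a misreport $x_i'$. Write $y=f_{\mathrm{lex}}(\mathbf{x})$ and $z=f_{\mathrm{lex}}(x_i',\mathbf{x}_{-i})$. Suppose $\|a-y\|_\infty>\|a-z\|_\infty$, and set $\tau=\|a-y\|_\infty-\|a-z\|_\infty>0$. Let $v\neq0$ be the direction given by the lemma. Let $F(w)=\sum_{j}\|x_j-w\|_\infty$ be the truthful objective and $G(w)=\sum_{j\ne i}\|x_j-w\|_\infty+\|x_i'-w\|_\infty$ the deviated one. Thus $y$ minimizes $F$ and $z$ minimizes $G$. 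For small $\varepsilon>0$, apply~\eqref{eq:single-exchange} to each $x_j$ with $j\ne i$. For agent $i$, compare the two objectives separately: at $y$ use~\eqref{eq:single-exchange-gain} with the true point $a$, and at $z$ use the misreport $x_i'$, bounding $\|z-\varepsilon v-x_i'\|_\infty\le\|z-x_i'\|_\infty+\varepsilon$ by the triangle inequality. Summing gives
\[
F(y+\varepsilon v)+G(z-\varepsilon v)\le F(y)+G(z)-\varepsilon+\varepsilon ,
\]
that is, $F(y+\varepsilon v)+G(z-\varepsilon v)\le F(y)+G(z)$. Since $y$ and $z$ are minimizers, each term on the left is at least its counterpart on the right. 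Hence both inequalities are equalities, so $y+\varepsilon v\in\argmin F$ and $z-\varepsilon v\in\argmin G$.

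This alone gives no contradiction, so I would finish with the lexicographic tie-breaking, which I expect to be the main obstacle. Let $r$ be the first coordinate with $v_r\ne0$. If $v_r=-1$, then $y+\varepsilon v$ is a lexicographically smaller minimizer of $F$, contradicting the choice of $y$. If $v_r=+1$, then $z-\varepsilon v$ is a lexicographically smaller minimizer of $G$, contradicting the choice of $z$. Exactly one of the two cases occurs, so either way we reach a contradiction.

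The remaining care is the $\varepsilon$-bookkeeping. The triangle-inequality step for $x_i'$ costs exactly $+\varepsilon$ and the gain~\eqref{eq:single-exchange-gain} recovers exactly $-\varepsilon$, so the slack is zero and not positive. That is precisely why the equality/tie-breaking argument is needed instead of a strict improvement. I would also check that "sufficiently small $\varepsilon$" can be chosen uniformly over the finitely many $x_j$, which holds since there are only $n$ points. Finally, I would double-check the sign convention that $v$ points from $y$ toward $z$ in each selected coordinate. This ensures that $y+\varepsilon v$ and $z-\varepsilon v$ move in opposite lexicographic directions, which is what makes exactly one of the two contradictions fire.
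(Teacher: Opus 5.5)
Your proposal is correct and follows the paper's proof essentially step for step. It uses the same direction $v$ from Lemma~\ref{lem:single-linfty-exchange} with $\tau=\|a-y\|_\infty-\|a-z\|_\infty$, and the same zero-slack accounting: a gain of exactly $\varepsilon$ from~\eqref{eq:single-exchange-gain} at the true location, a loss of at most $\varepsilon$ from the triangle inequality at the misreport, and a choice of $\varepsilon$ small enough for all $n$ reports at once. It then reaches the same contradiction: both moved points remain minimizers, and the lexicographic tie-breaking rule is violated according to the sign of the first nonzero coordinate of $v$.
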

\begin{proof}
Anonymity and optimality follow from the definition.
Suppose an agent with true location $a$ benefits by reporting $a'$.
Write $P=(a,\mathbf{x}_{-i})$, $P'=(a',\mathbf{x}_{-i})$, $y=f_{\mathrm{lex}}(P)$, and $z=f_{\mathrm{lex}}(P')$.
Then $\tau=\|a-y\|_\infty-\|a-z\|_\infty>0$.
Choose $v$ from Lemma~\ref{lem:single-linfty-exchange}.
Since there are finitely many reports, we can choose one $\varepsilon>0$ small enough for all its conclusions to hold.
We compare $y+\varepsilon v$ with $y$ on $P$, and $z-\varepsilon v$ with $z$ on $P'$.
The true agent's cost at $y+\varepsilon v$ decreases by $\varepsilon$.
The distance from the false report $a'$ to $z-\varepsilon v$ increases by at most $\varepsilon$, by the triangle inequality and $\|v\|_\infty=1$.
For every unchanged report, inequality~\eqref{eq:single-exchange} bounds the sum of its two cost changes by zero.
The manipulating agent's two cost changes sum to at most $-\varepsilon+\varepsilon=0$, and each other agent's pair of changes is also nonpositive.
Summing over agents yields
\[
    \operatorname{SC}(P,y+\varepsilon v)-\operatorname{SC}(P,y)
    +\operatorname{SC}(P',z-\varepsilon v)-\operatorname{SC}(P',z)\le0.
\]
Both cost differences are nonnegative because $y,z$ minimize their respective social costs.
Hence both are zero, and the moved points remain optimal for their respective profiles.
Because $v\ne0$, it has a first nonzero coordinate.
All earlier coordinates of both outputs remain unchanged.
If this coordinate of $v$ is negative, $y+\varepsilon v$ is lexicographically smaller than $y$.
If it is positive, $z-\varepsilon v$ is lexicographically smaller than $z$.
Either case contradicts the tie-breaking rule.
Hence no agent can benefit by misreporting.
\end{proof}

The mechanism can be computed by linear programming.
Introduce variables $y\in\mathbb{R}^d$ and $r_i\ge0$, minimize $\sum_i r_i$ subject to $-r_i\le y_j-x_{i,j}\le r_i$ for every $i,j$, and fix the optimal objective value.
Then minimize and fix $y_1,\ldots,y_d$ in order.
Thus $d+1$ linear programs suffice to compute $f_{\mathrm{lex}}$.

\section{Two-Facility Location}\label{sec:two-facilities}

We now consider two facilities in $\ell_p^d$, where $1<p<\infty$.
In the plane, no deterministic SP mechanism has a finite approximation ratio.
In higher dimensions, we obtain the same conclusion under anonymity, and in fact only require invariance under exchanging one pair of agents.
Both results hold for each fixed $n\ge3$.
The proof first reduces to three agents and studies the facility locations each agent can induce while another report is fixed.
A local restriction on these locations will force one agent to be served at her report throughout the fixed-report problem.
Exchange symmetry yields this restriction in every dimension.
In the plane, we also rule out the remaining possibility, in which the two agents control complementary lines.

\subsection{Common Structural Lemmas}\label{sec:two-tools}

Throughout this section, $\|\cdot\|$ denotes $\|\cdot\|_p$.
For a nonempty set $S$, write $\operatorname{dist}(x,S)=\inf_{y\in S}\|x-y\|$.
We use the duality map
\[
    J_p(v)=\bigl(|v_r|^{p-2}v_r\bigr)_{r=1}^d,
    \qquad \nabla\|v\|^p=pJ_p(v),
\]
where a zero coordinate of $J_p(v)$ is interpreted as zero.
Thus $J_p(v)$ describes how the $p$th power of distance changes under a small displacement.
For $p=2$, it is simply $J_2(v)=v$.
We have $\langle J_p(v),v\rangle=\|v\|^p$ and $J_p(tv)=t^{p-1}J_p(v)$ for $t>0$.
The norm is strictly convex and differentiable away from zero, and its $p$th power is differentiable everywhere.

It suffices to study three agents.
The following reduction uses partial group strategyproofness~\cite[Lemma 2.1]{lu10mechanism} and preserves exchange symmetry between two agents.

\begin{lemma}\label{lem:two-replication}
If there is a deterministic SP $\rho$-approximation for $n\ge3$ agents and two facilities, there is one for three agents with the same approximation guarantee.
If the original mechanism is invariant under exchanging a specified pair of agents, the reduction can preserve that exchange symmetry.
\end{lemma}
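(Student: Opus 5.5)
We group the $n$ agents into three groups and let each group report a single common location, choosing the group sizes so that the weighting introduced by grouping does not change the optimal cost. Given the mechanism $f$ for $n\ge3$ agents, partition $N$ into $G_1=\{1\}$, $G_2=\{2\}$, and $G_3=N\setminus\{1,2\}$, which is nonempty since $n\ge3$. If $f$ is invariant under exchanging a specified pair, relabel so that this pair is $\{1,2\}$. Define the three-agent mechanism
\[
    g(a,b,c)=f\bigl(a^{G_1},b^{G_2},c^{G_3}\bigr).
\]
It outputs a nonempty set of at most two points. Exchanging $a$ and $b$ corresponds exactly to exchanging agents $1$ and $2$ in $f$, so the exchange symmetry is inherited.

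\emph{Strategyproofness.} Deviations by the first two agents of $g$ are unilateral deviations of agents $1$ and $2$ in $f$, so SP of $f$ applies directly. A deviation by the third agent of $g$, from true location $c$ to $c'$, is a joint misreport by all members of $G_3$, who share the true location $c$. Partial group strategyproofness~\cite[Lemma 2.1]{lu10mechanism} says such a joint misreport cannot reduce their common cost. Alternatively, as in the grouping argument of Section~\ref{sec:single-euclidean}, we can change the members' reports one at a time and chain the individual SP inequalities, each evaluated at the true location $c$. Hence $g$ is SP.

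\emph{Approximation.} Write the grouped profile as $\mathbf{x}=(a^{G_1},b^{G_2},c^{G_3})$, with weights $w=(1,1,n-2)$ on the locations $(a,b,c)$, and let $d_{ij}$ denote the distances between these three locations. The key step is the identity
\[
    \OPT_2(\mathbf{x})=\min_{i\ne j}\min\{w_i,w_j\}\,d_{ij}.
\]
For the upper bound, place one facility at the remaining location and one at the heavier location of the chosen pair. For the lower bound, first note that if two locations coincide the claim is trivial. Otherwise, with three distinct locations and two facilities, pigeonhole gives one facility $y$ that is the nearest facility for agents at two distinct locations $i,j$. Their cost is then at least $w_i\|x_i-y\|+w_j\|x_j-y\|\ge\min\{w_i,w_j\}d_{ij}$ by the triangle inequality.

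Every pair of locations contains one of the two singleton groups, so $\min\{w_i,w_j\}=1$ for every pair. Hence $\OPT_2(\mathbf{x})=\min_{i\ne j}d_{ij}$. By the same argument with unit weights, this equals the three-agent optimum $\OPT_2(a,b,c)$. Meanwhile the three-agent social cost of $g(a,b,c)$ is at most the $n$-agent social cost of $f(\mathbf{x})$, since all weights are at least $1$. Therefore
\[
    \operatorname{SC}\bigl((a,b,c),g(a,b,c)\bigr)\le \operatorname{SC}\bigl(\mathbf{x},f(\mathbf{x})\bigr)\le\rho\,\OPT_2(\mathbf{x})=\rho\,\OPT_2(a,b,c).
\]
The main point to get right is this choice of two singleton groups; with other group sizes, the weighted optimum can exceed the unweighted one, and the guarantee would degrade.
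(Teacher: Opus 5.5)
Your proposal is correct and follows the paper's proof: you replicate one agent $n-2$ times and keep the (exchange-symmetric) pair as singletons. You get SP for the grouped agent by chaining individual SP inequalities, and you observe that with multiplicities $(1,1,n-2)$ the weighted optimum equals the unweighted closest-pair distance because every pair of locations contains a singleton. Your weighted formula $\min_{i\ne j}\min\{w_i,w_j\}d_{ij}$ is a compact version of the paper's case analysis on which pair is closest, with the same lower bound via a shared nearest facility.
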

\begin{proof}
Given an $n$-agent mechanism $f$, define $\widehat f(a,b,c)=f(a,\ldots,a,b,c)$, using $n-2$ copies of $a$.
Strategyproofness for $b,c$ is inherited directly.
For the first report, change the copies of $a$ to $a'$ one at a time.
Each SP inequality compares distances from $a$ and says that this distance cannot decrease.
Chaining the inequalities proves SP for the first virtual agent.

For three locations, the optimal two-facility social cost is
\[
    \OPT_2(a,b,c)=\min\{\|a-b\|,\|a-c\|,\|b-c\|\}.
\]
Indeed, two of the three locations must share a nearest facility, giving the lower bound by the triangle inequality, and serving a closest pair with one facility attains it.
The same optimum holds for multiplicities $(n-2,1,1)$.
The unweighted lower bound still applies.
If a closest pair includes $a$, place one facility at $a$ and the other at the third location.
If $b,c$ are a closest pair, place facilities at $a,b$.
In each case, only one unreplicated agent pays the closest-pair distance.
The unweighted cost of $\widehat f$ is at most the original weighted cost, so its ratio is at most $\rho$.
To preserve exchange symmetry, use the specified pair for the two unreplicated reports.
\end{proof}

Assume for contradiction that a three-agent SP mechanism $f$ has finite approximation ratio $\rho$.
Fix agent $k$ at $a$ and denote the other two agents by $i,j$.
We keep these identities fixed and write $f(a,b,c)$ for reports $a,b,c$ by $k,i,j$, respectively.
Define the menu of agent $i$, when agent $j$ reports $q$, by
\[
    S_i(q)=\bigcup_{b\in\mathbb{R}^d}f(a,b,q).
\]
Define $S_j(q)$ similarly by fixing agent $i$ at $q$ and varying agent $j$.
A menu collects every facility that can appear as that agent changes her report, following the image-set framework of~\cite[Section 3.1]{lu10mechanism}.
We suppress the fixed identity $k$ and location $a$ in this notation.
The nearest-point and closedness properties below are standard consequences of SP~\cite[Lemma 3.1 and Corollary 3.3]{lu10mechanism}.

\begin{lemma}\label{lem:two-menus}
The menus are closed, contain $a,q$, and satisfy
\[
    \operatorname{dist}(b,f(a,b,q))=\operatorname{dist}(b,S_i(q))
    \le\rho\|a-q\|\qquad\text{for every }b.
\]
If $y$ is a nearest point of $S_i(q)$ to $b$, then reporting any point in the open segment from $b$ to $y$, or reporting $y$ itself, forces the output to contain $y$.
\end{lemma}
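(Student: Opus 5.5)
We prove the four assertions in turn: membership of $a$ and $q$, the nearest-point identity, the bound by $\rho\|a-q\|$, closedness, and finally the forcing statement for reports on the segment. Each uses only SP of agent $i$ with the other two reports fixed at $(a,q)$, together with the finite ratio $\rho$. Write $g(b)=f(a,b,q)$, so that $S_i(q)=\bigcup_b g(b)$.

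\emph{Membership of $a$ and $q$.} Take $b=a$. The profile $(a,a,q)$ has only two distinct locations, so $\OPT_2=0$. A finite ratio then forces zero social cost, hence $a,q\in g(a)\subseteq S_i(q)$.

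\emph{The identity and the bound.} SP for agent $i$ at true location $b$ gives $\operatorname{dist}(b,g(b))\le\operatorname{dist}(b,g(b'))$ for every $b'$. Taking the infimum over $b'$ yields
\[
\operatorname{dist}(b,g(b))\le\operatorname{dist}(b,S_i(q)).
\]
The reverse inequality holds because $g(b)\subseteq S_i(q)$, so the two distances are equal. For the bound, use the approximation guarantee on $(a,b,q)$. Since $\OPT_2(a,b,q)\le\|a-q\|$, agent $i$'s cost is at most $\operatorname{SC}\le\rho\|a-q\|$.

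\emph{Closedness.} Let $y_m\in g(b_m)$ converge to $y$, and report $b=y$. If some $b_m$ are bounded away from $y$, use SP at the true location $y$ against the report $b_m$. This gives
\[
\operatorname{dist}(y,g(y))\le\|y-y_m\|\to0,
\]
and since $g(y)$ is a finite set, $y\in g(y)\subseteq S_i(q)$. The same inequality applies without change if the $b_m$ approach $y$, so this case needs no separate treatment.

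\emph{Forcing on the segment.} Let $y$ be a nearest point of $S_i(q)$ to $b$, which exists by closedness. Let $b'$ be a point of the open segment from $b$ to $y$, or $b'=y$. By the triangle inequality, and since $b'$ lies on the segment,
\[
\operatorname{dist}(b',S_i(q))\ge \operatorname{dist}(b,S_i(q))-\|b-b'\|=\|b'-y\|.
\]
Together with $y\in S_i(q)$, this shows $y$ is a nearest point of $S_i(q)$ to $b'$. By the identity already proved, $g(b')$ contains some point $y'$ with $\|b'-y'\|=\|b'-y\|$. Then
\[
\|b-y'\|\le\|b-b'\|+\|b'-y'\|=\|b-y\|=\operatorname{dist}(b,S_i(q)),
\]
so equality holds throughout. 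Hence $y'$ is also nearest to $b$, and $b'$ lies on a segment from $b$ to $y'$ with the triangle inequality tight. Strict convexity of $\ell_p$ for $1<p<\infty$ forces $b,b',y'$ to be collinear with $b'$ between $b$ and $y'$. The distances from $b'$ to $y$ and to $y'$ agree, and both points lie on the ray from $b$ through $b'$, so $y'=y$. The case $b'=y$ is immediate, since distance $0$ forces $y\in g(y)$.

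The main obstacle is this last step: the strict convexity argument is what identifies the forced point with $y$, and it must also handle the boundary case $b'=y$. Everything else follows directly from SP and the approximation guarantee, as in~\cite[Lemma 3.1 and Corollary 3.3]{lu10mechanism}, which we follow.
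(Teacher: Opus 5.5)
Your proposal is correct and follows essentially the same argument as the paper: the identity and closedness come directly from SP of agent $i$ (with finiteness of the output set closing the limit argument), membership of $a,q$ and the bound $\rho\|a-q\|$ come from the approximation guarantee, and segment persistence comes from the triangle inequality together with strict convexity. The only cosmetic difference is that the paper shows $y$ is the unique nearest menu point to each point $b_t$ of the segment, whereas you apply the same strict-convexity step directly to the selected point $y'$; the two are equivalent.
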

\begin{proof}
SP says that the truthful output attains the infimum distance over all reports, which gives the equality.
If $b$ belongs to the closure of the menu, this distance is zero, so the truthful output contains $b$.
Thus the menu is closed.
Profiles supported on $\{a,q\}$ have optimum zero, forcing a facility at each distinct reported location and hence $a,q\in S_i(q)$.
Placing facilities at $a,b$ costs at most $\|a-q\|$, giving the uniform bound.

For segment persistence, let $b_t=(1-t)b+ty$, where $0<t<1$.
For any $z\in S_i(q)$, the triangle inequality gives
\[
    \|b_t-z\|\ge\|b-z\|-\|b-b_t\|
    \ge(1-t)\|b-y\|=\|b_t-y\|.
\]
If $z\ne y$, equality would require equality in the triangle inequality and $\|b-z\|=\|b-y\|$.
Strict convexity then forces $z=y$, a contradiction.
Thus $y$ is the unique nearest menu point to $b_t$ and must be selected.
Reporting $y$ gives zero cost and has the same conclusion, as in the outcome-replacement property of~\cite[Corollary 3.2]{lu10mechanism}.
We refer to the open-segment conclusion as segment persistence.
\end{proof}

We use separated instances with $b,c$ close together and far from $a$, following the approach of Fotakis and Tzamos~\cite[Section 2]{fotakis2014power}.
If $D=\|b-c\|>0$ and $\min\{\|a-b\|,\|a-c\|\}>(\rho+1)D$, the approximation guarantee prevents $a$ from sharing a nearest facility with either $b$ or $c$.
Indeed, their combined distance to that facility would exceed $\rho D$, whereas $\OPT_2(a,b,c)\le D$.
There is therefore one facility serving $a$ and one facility serving both $b,c$.
Neither close agent can tie between the two facilities, since such a tie would also give a shared nearest facility with $a$.
Write $g(b,c)$ for their common facility on these separated profiles.
If either close agent reports the other's position, the resulting two-location profile has zero optimum.
The SP inequalities therefore give
\[
    \|b-g(b,c)\|\le\|b-c\|,
    \qquad \|c-g(b,c)\|\le\|b-c\|.
\]

For $q\ne a$, set $r(q)=\frac{\|a-q\|}{1000(\rho+2)}$.
This radius leaves room for the local report movements below while maintaining separation.
In particular, whenever $b,c\in B(q,4r(q))$, their distance is at most $8r(q)$ and their distances from $a$ are at least $\|a-q\|-4r(q)$.

\begin{lemma}\label{lem:two-local-geometry}
For sufficiently close $b,c$ away from $a$, their common facility $y=g(b,c)$ satisfies
\begin{equation}\label{eq:two-dual-geometry}
    \langle J_p(b-y),c-y\rangle\le0,
    \qquad \langle J_p(c-y),b-y\rangle\le0.
\end{equation}
Furthermore, each local menu is star-shaped about the fixed close report:
\[
    z\in S_i(q)\cap\overline B(q,r(q))
    \quad\Longrightarrow\quad [q,z]\subseteq S_i(q).
\]
The same statement holds for agent $j$.
\end{lemma}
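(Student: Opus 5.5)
For the dual inequalities \eqref{eq:two-dual-geometry}, I will use first-order optimality of $y=g(b,c)$ within agent $i$'s local menu. Fix a separated profile with agent $j$ at $c$. In agent $i$'s menu $S_i(c)$, the facility serving the close agents is $y$, and SP makes $y$ a nearest point of $S_i(c)$ to $b$. When agent $i$ reports $c$, the two-location profile has zero optimum, so some facility lies at $c$; separation forces it to be the close agents' facility. Hence $c\in S_i(c)$.

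The key step is to show that the whole segment $[y,c]$ lies in $S_i(c)$, at least near $y$. Given that, minimality of $t\mapsto\|b-(y+t(c-y))\|^p$ at $t=0$ yields $\langle J_p(b-y),c-y\rangle\le0$, since the derivative of $\|v\|^p$ is $pJ_p(v)$. Exchanging the roles of $i$ and $j$ gives the second inequality. So the dual inequality follows from the star-shapedness claim applied about $q=c$: since $y$ and $c$ both lie in $S_i(c)\cap\overline B(c,r(c))$ (note $\|c-y\|\le\|b-c\|$ is small), we get $[c,y]\subseteq S_i(c)$. I will therefore prove star-shapedness first.

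For star-shapedness, fix $q$ and $z\in S_i(q)\cap\overline B(q,r(q))$, and take $w=(1-s)q+sz$. Consider the report $b=w$ for agent $i$. By separation, the output has one far facility serving $a$ and one close facility $g(w,q)$. We know $\|w-g(w,q)\|\le\|w-q\|$ and $\|q-g(w,q)\|\le\|w-q\|$, and the nearest-point identity gives $\operatorname{dist}(w,S_i(q))=\|w-g(w,q)\|$. I want to show $g(w,q)=w$. Suppose instead $y'=g(w,q)\ne w$. Agent $j$, located at $q$, faces the menu $S_j(w)$, which contains $w$. I will combine agent $i$'s SP at $w$, which means no point of $S_i(q)$ is closer than $y'$, and in particular $z$ and $q$ are not closer, with agent $j$'s SP. The plan is a two-agent exchange argument in the spirit of Lemma~\ref{lem:two-menus}: move agent $i$'s report along $[q,z]$ and track $g(\cdot,q)$. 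At $s=0$ and $s=1$ we have $g=q$ and, by segment persistence, $g=z$. Strict convexity of $\ell_p$ then implies that a nearest point to $w$ off the segment cannot beat the points of the segment, unless it lies strictly closer, which would contradict the bound $\|q-y'\|\le\|w-q\|$ together with agent $j$'s inequality. Concretely, I will apply segment persistence to agent $j$: reports between $q$ and a nearest point force that point, and I will then compare distances via the triangle inequality with strict convexity. The scale $r(q)$ guarantees that all these profiles remain separated, and that the relevant facilities stay in $B(q,4r(q))$, so that $g$ is well defined throughout.

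The main obstacle is this star-shapedness step. Ruling out $g(w,q)$ lying off the segment requires a genuine interplay of the two agents' SP constraints, and the argument needs strict convexity to prevent ties. If the direct approach fails, I would fall back on closedness of the menus plus a continuity/connectedness argument along $s\mapsto w$, using the uniform local bounds, to show that the set of $s$ with $w\in S_i(q)$ is both open and closed in $[0,1]$.
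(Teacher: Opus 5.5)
Your derivation of \eqref{eq:two-dual-geometry} from star-shapedness is valid: if $[c,y]\subseteq S_i(c)$ and $y$ is a nearest point of $S_i(c)$ to $b$, differentiating $t\mapsto\|b-y-t(c-y)\|^p$ at $t=0$ gives $\langle J_p(b-y),c-y\rangle\le0$. However, this ordering makes the whole proof depend on star-shapedness, and you do not prove that step.

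\textbf{Why the pointwise argument fails.} Take $w=(1-s)q+sz$ and suppose $y'=g(w,q)\ne w$. The constraints you list are:
\begin{itemize}
\item $y'$ is a nearest menu point to $w$, so it is no farther from $w$ than $q$ or $z$;
\item agent $j$'s deviation to $w$ gives $\|q-y'\|\le\|w-q\|$.
\end{itemize}
Off-segment points satisfy all of these. For example, in $\ell_2$, take any $y'$ near $w$ on the circle with diameter $[q,w]$. It even satisfies the sharp first-order condition $\langle w-y',y'-q\rangle\ge0$. So strict convexity and the triangle inequality at a single $w$ cannot give a contradiction. Membership of $w$ in $S_i(q)$ has to be propagated back from $z$, which lies farther along the segment.

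\textbf{Why the fallback also fails.} Closedness of the menu gives that $\{s: w_s\in S_i(q)\}$ is closed. Openness of this set is the entire content of the claim, and you give no mechanism for it.

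\textbf{The missing idea.} Prove \eqref{eq:two-dual-geometry} first and independently, using agent $j$'s incentives rather than agent $i$'s menu.
\begin{itemize}
\item Move agent $i$'s report to $z_t=y+t(b-y)$. Segment persistence keeps $y$ in the output, and for small $t$ it still serves $c$.
\item Agent $j$ at $c$ could instead report $z_t$ exactly. That profile has zero optimum, so it has a facility at $z_t$.
\item SP then gives $\|c-y\|\le\|c-y-t(b-y)\|$. The right derivative at $t=0$ yields $\langle J_p(c-y),b-y\rangle\le0$, and interchanging the roles gives the other inequality.
\end{itemize}
Star-shapedness then follows by integrating along the segment. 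Put $v=z-q$, $h(t)=\operatorname{dist}(q+tv,S_i(q))^p$ and $y_t=g(q+tv,q)$. At differentiability points the envelope formula is $h'(t)=p\langle J_p(q+tv-y_t),v\rangle$. Combined with the first inequality of \eqref{eq:two-dual-geometry} at $(b,c)=(q+tv,q)$, this gives
\[
\frac{t}{p}h'(t)=h(t)+\langle J_p(q+tv-y_t),y_t-q\rangle\ge h(t).
\]
Hence $h(t)/t^p$ is nondecreasing on $(0,1]$. Since $h(1)=0$, $h$ vanishes on $(0,1]$, and closedness of the menu finishes the proof. This monotone quantity is exactly the global propagation from $z$ back to $q$ that your local comparison lacks.
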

\begin{proof}
Move $b$ towards $y$ to $z_t=y+t(b-y)$, with $t>0$ small.
Lemma~\ref{lem:two-menus} preserves the facility $y$.
All these reports remain in a sufficiently small neighborhood away from $a$, so $c$ is still served by $y$.
Agent $j$ can misreport $z_t$, in which case zero optimal cost forces facilities at $a,z_t$.
SP therefore gives $\|c-y\|\le\|c-z_t\|$.
The function $t\mapsto\|c-y-t(b-y)\|^p$ therefore has a nonnegative right derivative at zero.
Its derivative is $-p\langle J_p(c-y),b-y\rangle$, proving the second inequality in~\eqref{eq:two-dual-geometry}.
Interchanging the roles in this argument proves the first.

For star-shapedness, write $v=z-q$ and $h(t)=\operatorname{dist}(q+tv,S_i(q))^p$ for $0\le t\le1$.
Let $y_t$ be the common facility when the reports are $q+tv,q$.
The first inequality in~\eqref{eq:two-dual-geometry} gives
\[
    \langle J_p(q+tv-y_t),y_t-q\rangle\ge0.
\]
The function $h$ is Lipschitz on $[0,1]$.
At a differentiability point $t$, keep the nearest point $y_t$ fixed and compare $h(t+\eta)$ with $\|q+(t+\eta)v-y_t\|^p$.
Equality holds at $\eta=0$, and the upper bound for both signs of $\eta$ gives $h'(t)=p\langle J_p(q+tv-y_t),v\rangle$.
Consequently,
\[
    \frac{t}{p}h'(t)
    =h(t)+\langle J_p(q+tv-y_t),y_t-q\rangle\ge h(t).
\]
On every interval bounded away from zero, $h(t)/t^p$ is absolutely continuous and its derivative is nonnegative almost everywhere.
It is therefore nondecreasing for $t>0$.
Since $h(1)=0$, it vanishes throughout $(0,1]$, proving the claim.
\end{proof}

A local menu direction at $q$ is a unit vector parallel to $z-q$ for some $z\in S_i(q)\cap\overline B(q,r(q))$ with $z\ne q$.
Let $D_i(q)$ be the closure of these directions on the unit sphere, and let $C_i(q)$ be their closed convex conical hull.
If there are no such directions, set $D_i(q)=\varnothing$ and $C_i(q)=\{0\}$.
The cone $C_i(q)$ records the local directions in which agent $i$ can induce a facility.
Star-shapedness means that a realized direction remains available at arbitrarily small positive distances.
Thus shrinking the radius in this definition leaves the direction sets and cones unchanged.
We call a cone \emph{pointed} if it contains no nonzero line, equivalently if $C\cap(-C)=\{0\}$.

The following lemma transfers a nearest-point inequality from $S_i(c)$ to the local directions of $S_i(y)$.
This lets us compare menus with different fixed close reports.

\begin{lemma}\label{lem:two-normal-transfer}
Fix any $c$ and let $y$ be a nearest point of $S_i(c)$ to $x\notin S_i(c)$.
If $y\ne a$, then
\begin{equation}\label{eq:two-normal-transfer}
    \langle J_p(x-y),v\rangle\le0\qquad\text{for every }v\in C_i(y).
\end{equation}
\end{lemma}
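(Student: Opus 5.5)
The plan is to argue by contradiction, combining the local star-shapedness from Lemma~\ref{lem:two-local-geometry} with segment persistence from Lemma~\ref{lem:two-menus}. The transfer from $S_i(y)$ back to $S_i(c)$ in the second step is not yet established; that step carries the whole argument.

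\textbf{Reduction to realized directions.} The map $v\mapsto\langle J_p(x-y),v\rangle$ is linear, and $C_i(y)$ is the closed convex conical hull of $D_i(y)$. So it suffices to prove the inequality for realized directions $v$, that is, unit vectors parallel to $z-y$ with $z\in S_i(y)\cap\overline B(y,r(y))$, $z\ne y$. Closure and conic combinations then extend it to all of $C_i(y)$. By Lemma~\ref{lem:two-local-geometry}, $[y,z]\subseteq S_i(y)$, so $y+tv\in S_i(y)$ for every small $t>0$. Suppose for contradiction that $\langle J_p(x-y),v\rangle>0$. Since $\frac{d}{dt}\|x-y-tv\|^p\big|_{t=0}=-p\langle J_p(x-y),v\rangle<0$, the points $y+tv$ are strictly closer to $x$ than $y$ for small $t$. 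The same holds, with a uniform margin, for points $w$ on the open segment from $x$ to $y$ sufficiently close to $y$. For such $w$, segment persistence (Lemma~\ref{lem:two-menus}) says that the output $f(a,w,c)$ contains $y$ and that $y$ is the unique nearest point of $S_i(c)$ to $w$.

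\textbf{Transfer step (main obstacle).} We must convert the availability of $y+tv$ in $S_i(y)$, where agent $j$ reports $y$, into a point of $S_i(c)$ strictly closer to $w$ than $y$. That would contradict the uniqueness of $y$ as nearest point. I would try to obtain this from agent $j$'s SP constraints, comparing the profiles $(a,w,c)$ and $(a,w,y)$.

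In $(a,w,y)$, the reports $w$ and $y$ are close, and $y\ne a$ keeps them away from $a$ after choosing $w$ within $r(y)$ of $y$. So the profile is separated, with a common facility $g(w,y)\in S_i(y)$. By Lemma~\ref{lem:two-menus}, $g(w,y)$ is the nearest point of $S_i(y)$ to $w$, hence strictly closer to $w$ than $y$. In particular, agent $j$, truthfully at $y$, is not served at $y$.

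In $(a,w,c)$, by contrast, the output contains $y$. Reporting $c$ therefore gives agent $j$ at true location $y$ zero cost, while truthful reporting gives positive cost (the $a$-facility is far). This contradicts SP for agent $j$ at $(a,w,y)$. The crux is verifying that nothing else in $f(a,w,y)$ serves $j$ at distance zero. Separation and the uniform bound $\operatorname{dist}(w,S_i(y))\le\rho\|a-y\|$ should guarantee this, since the far facility serves only $a$.

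\textbf{Degenerate cases and conclusion.} I would check the degenerate cases separately. These are $x$ far from $y$, which is handled because only points $w$ near $y$ are used, and $y$ close to $a$, which is excluded by $y\ne a$ after shrinking radii. With these in place, the contradiction gives $\langle J_p(x-y),v\rangle\le0$ for realized directions, and linearity and closure give~\eqref{eq:two-normal-transfer}. If the transfer step fails as stated, the fallback is to run the same comparison through agent $j$'s menu $S_j(w)$ with agent $i$ fixed at $w$. There one would use that $y$ and $c$ both lie in it and apply the first inequality of~\eqref{eq:two-dual-geometry}.
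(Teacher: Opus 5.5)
Your proposal is correct and follows the paper's proof, only in contrapositive form. The paper uses exactly your comparison of the profiles $(a,w,y)$ and $(a,w,c)$: agent $j$ at $y$ can report $c$ and be served at $y$ by segment persistence, which forces $y\in f(a,w,y)$, so $y=g(w,y)$ is a nearest point of $S_i(y)$ to $w$. The paper then differentiates $\|w-y-tv\|^p$ at $t=0$ using star-shapedness. Your hedging is unnecessary: the point you flag as the crux is settled by the separation property, since the close pair shares a common facility and neither close agent can tie with the facility serving $a$, so no fallback is needed.
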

\begin{proof}
Let agent $i$ report $z=y+\varepsilon(x-y)$ for sufficiently small $\varepsilon>0$.
Segment persistence forces a facility at $y$.
Agent $j$, if truly located at $y$, can obtain zero cost by reporting $c$.
Thus the profile with reports $z,y$ must also contain $y$.
These reports form a separated close pair, so $y$ is their common facility and is a nearest point of $S_i(y)$ to $z$.
For every realized local direction $v$, star-shapedness supplies $y+tv\in S_i(y)$ for all sufficiently small $t>0$.
Hence $\|z-y\|^p\le\|z-y-tv\|^p$.
Differentiating at zero gives $\langle J_p(x-y),v\rangle\le0$.
This inequality is linear in $v$, so it extends to the closed convex conical hull.
\end{proof}

We next compare the directions available to the two agents at the same center.
For Euclidean distance, the conclusion says that these directions form nonobtuse angles.

\begin{lemma}\label{lem:two-direction-compatibility}
For $u\in D_i(q)$ and $v\in D_j(q)$,
\begin{equation}\label{eq:two-direction-compatibility}
    \langle J_p(u),v\rangle\ge0,
    \qquad \langle J_p(v),u\rangle\ge0.
\end{equation}
\end{lemma}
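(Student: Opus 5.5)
By symmetry it suffices to prove $\langle J_p(u),v\rangle\ge0$ for $u\in D_i(q)$ and $v\in D_j(q)$; the second inequality follows by exchanging the roles of $i$ and $j$. Both sides are continuous in $u$ and $v$, so we may assume that $u$ and $v$ are realized directions. By star-shapedness (Lemma~\ref{lem:two-local-geometry}), the points $q+su\in S_i(q)$ and $q+tv\in S_j(q)$ then exist for all sufficiently small $s,t>0$. The plan is to build a separated profile in which agent $i$'s output is pinned by her menu at fixed report $q$, and then apply the dual geometry~\eqref{eq:two-dual-geometry}.

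Concretely, let agent $i$ report $b=q+su$ while agent $j$ reports $q$. Since $b\in S_i(q)$, the output contains $b$, so $g(b,q)=b$: the common facility sits at agent $i$'s report. The first inequality of~\eqref{eq:two-dual-geometry} is then vacuous, so we instead move agent $j$. Let agent $j$ report $c=q+tv$ with $i$ fixed at $b$, and track $y=g(b,c)$. Then~\eqref{eq:two-dual-geometry} gives $\langle J_p(c-y),b-y\rangle\le0$. If we can show that $y=q$ for a suitable choice of $s,t$, this reads $t^{p-1}s\langle J_p(v),u\rangle\le0$, which is the wrong sign. So the useful configuration is the opposite one, where the facility is at $q$ and \emph{both} true reports sit in the menus. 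I will therefore argue by contradiction. Suppose $\langle J_p(u),v\rangle<0$. Then moving from $q$ in direction $v$ strictly decreases the $\|\cdot\|^p$-distance to $q+su$ at first order, that is, $\|q+su-(q+\varepsilon v)\|<\|su\|$ for small $\varepsilon$.

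Now consider agent $i$ truly at $b=q+su$ and agent $j$ reporting $q$. The output contains $b$, so agent $i$ has zero cost, and nothing is gained there directly. Instead, take agent $j$ truly at $c'=q+\varepsilon v$, a point of $S_j(q)$, with agent $i$ reporting $q$, so the facility is at $c'$. Then take agent $i$ at $b$ with $j$ at $c'$ and use Lemma~\ref{lem:two-normal-transfer}: with fixed report $c'$ and $x=b$, let $y$ be the nearest point of $S_i(c')$ to $b$. Agent $j$ truly at $c'$ can report $q$ and obtain $c'$ only when $i$ reports $q$, which is not enough. The cleaner route is to apply Lemma~\ref{lem:two-normal-transfer} with fixed close report $c=q$ and a point $x$ chosen so that $q$ is its nearest point in $S_i(\cdot)$. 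To set this up, I will take $x=q-\delta J_p^{-1}$-type perturbations along $-u$ and exploit the local cone $C_j(q)$ through the symmetric version of that lemma for agent $j$: if $q$ is the nearest point of $S_j(c)$ to some $x$ with $x-q$ roughly parallel to $-u$ in the dual sense, then~\eqref{eq:two-normal-transfer} for agent $j$ yields $\langle J_p(x-q),v\rangle\le0$, that is, $-\langle J_p(u),v\rangle\le0$, as desired.

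The main obstacle is producing such an $x$. We need a fixed report $c$ for agent $i$ and a true location $x$ of agent $j$ along $q-\lambda u$ whose nearest point in $S_j(c)$ is exactly $q$, with $q\ne a$. I would try $c=q+su$, so that agent $i$ sits at a point of her own menu. Agent $j$ reporting $q$ then yields a facility at $q$, because agent $i$ is at $q+su$ while $j$ reports $q$; the two-location optimum arguments force facilities at the reports, and by an outcome-replacement argument in the style of Lemma~\ref{lem:two-menus} the point $q$ lies in $S_j(q+su)$. Segment persistence along the segment from $q-\lambda u$ toward $q$, together with the separated-pair structure, should then show that $q$ is the nearest point, using $\|b-g\|\le\|b-c\|$ to exclude closer menu points on the far side. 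Finally, a limit $s\to0$ handles any mismatch between $C_j(q)$ and $C_j(\cdot)$ at nearby centers.
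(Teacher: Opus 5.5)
\textbf{There is a genuine gap at the step your whole argument rests on.} You claim that $q\in S_j(q+su)$, on the grounds that ``the two-location optimum arguments force facilities at the reports.'' That argument does not apply here: the profile $(a,q+su,q)$ has three distinct locations.

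In fact the opposite holds. Since $q+su\in S_i(q)$, the common facility is $g(q+su,q)=q+su$, and the only other facility lies near $a$. Lemma~\ref{lem:two-menus} for agent $j$ then gives $\operatorname{dist}(q,S_j(q+su))=\operatorname{dist}(q,f(a,q+su,q))=s>0$. So $q$ is not in this menu and cannot be the nearest point of $S_j(q+su)$ to $x=q-\lambda u$. Your planned use of Lemma~\ref{lem:two-normal-transfer} with nearest point $q$ and cone $C_j(q)$ therefore collapses.

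Separately, your contradiction setup has a sign error. If $\langle J_p(u),v\rangle<0$, the derivative of $\|su-\varepsilon v\|^p$ at $\varepsilon=0$ is $-ps^{p-1}\langle J_p(u),v\rangle>0$. So moving along $v$ increases the distance to $q+su$; it does not decrease it.

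\textbf{What the lemma actually gives, and the missing idea.} In your configuration, $q+su$ is the nearest point of $S_j(q+su)$ to $x=q$. Lemma~\ref{lem:two-normal-transfer} then yields $\langle J_p(u),w\rangle\ge0$ only for $w\in C_j(q+su)$, which is a cone at a different center. Your closing remark that ``a limit $s\to0$ handles any mismatch'' between $C_j(q)$ and $C_j(q+su)$ is exactly the missing content. Nothing established so far says the direction sets vary semicontinuously with the center.

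The paper bridges this by tracking an explicit facility rather than a cone:
\begin{itemize}
\item Fix small $t$ with $2t<r(q+tu)$. Then $q+tu$ is the unique nearest point of $S_j(q+tu)$ to $q$, by star-shapedness plus strict convexity.
\item Closedness and boundedness then force $y_s=g(q+tu,q+sv)\to q+tu$ as $s\downarrow0$.
\item Star-shapedness puts $[q+sv,y_s]$ inside $S_i(q+sv)$.
\item Agent $i$ at $q$ has truthful cost $s$ against report $q+sv$, because $g(q,q+sv)=q+sv$. SP therefore gives $\langle J_p(v),y_s-q\rangle\ge s$.
\item Letting $s\downarrow0$ yields $\langle J_p(v),u\rangle\ge0$.
\end{itemize}

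\textbf{How your route could be repaired.} Swap the roles of $i$ and $j$. The same uniqueness argument gives $g(q+su,q+tv)\to q+tv$ as $s\downarrow0$. This produces directions in $C_j(q+su)$ that tend to $v$, and combining them with your normal-transfer inequality at center $q+su$ would give $\langle J_p(u),v\rangle\ge0$. That convergence step is the key idea your proposal is missing.
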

\begin{proof}
First take realized unit directions $u,v$.
For sufficiently small $t,s>0$, star-shapedness gives $g(q+tu,q)=q+tu$ and $g(q,q+sv)=q+sv$.
Fix $t$ small enough that $2t<r(q+tu)$.
The point $q+tu$ is the unique nearest point of $S_j(q+tu)$ to $q$.
Indeed, another nearest point would lie within distance $2t$ of $q+tu$, and the segment between the two points would belong to the menu by star-shapedness.
Its midpoint would be strictly closer to $q$, by strict convexity.

It follows that $g(q+tu,q+sv)\to q+tu$ as $s\downarrow0$.
Indeed, nearest-point comparison with $q+tu$ bounds these facilities uniformly as $s\downarrow0$.
Every subsequential limit lies in the closed menu $S_j(q+tu)$ and minimizes the distance from $q$.
Uniqueness identifies every limit with $q+tu$.

Write $y_s=g(q+tu,q+sv)$.
For sufficiently small $t,s$, star-shapedness puts $[q+sv,y_s]$ in $S_i(q+sv)$.
When agent $i$ is at $q$, her truthful cost with this other report is $s$.
SP therefore gives
\[
    s^p\le\|sv+\lambda(y_s-q-sv)\|^p\qquad(0\le\lambda\le1).
\]
Differentiating at $\lambda=0$ yields $\langle J_p(v),y_s-q\rangle\ge s$.
Letting $s\downarrow0$ gives $\langle J_p(v),u\rangle\ge0$.
Fixing $s$ and letting $t$ tend to zero gives the other inequality.
Continuity of $J_p$ extends both inequalities to the closures defining $D_i(q),D_j(q)$.
\end{proof}

For each fixed argument of $J_p$, linearity in the other argument extends~\eqref{eq:two-direction-compatibility} to conical combinations.

The next lemma is the main propagation step.
It turns a local restriction on the directions available to one agent into exact service for the other throughout the problem with agent $k$ fixed at $a$.
Here exact service means that a facility is placed at the agent's report.

\begin{lemma}\label{lem:two-pointed}
If $C_i(q)$ is pointed for some $q\ne a$, then agent $j$ is always served at her reported location when agent $k$ is fixed at $a$:
\[
    c\in f(a,b,c)\qquad\text{for every }b,c\in\mathbb{R}^d.
\]
\end{lemma}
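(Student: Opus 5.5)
The plan is to show that $S_j(b)=\mathbb{R}^d$ for every $b$, which is the same as $c\in f(a,b,c)$ for all $b,c$. I would first get this locally near $q$, then spread it to a neighbourhood, and finally to all centres by a connectedness argument. Throughout I use only Lemmas~\ref{lem:two-menus}--\ref{lem:two-direction-compatibility}.

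\emph{Step 1 (a separating direction).} Since $C_i(q)$ is a pointed closed convex cone, there is a vector $w$ with $\langle w,u\rangle<0$ for every $u\in D_i(q)$. By compactness of $D_i(q)$ the margin is uniform. Now suppose a point $x$ close to $q$ does not lie in the menu $S_j(q)$, and let $y$ be a nearest point of $S_j(q)$ to $x$. Applying Lemma~\ref{lem:two-normal-transfer} to agent $j$ gives $\langle J_p(x-y),v\rangle\le0$ for all $v\in C_j(y)$.

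I would combine two further facts. By Lemma~\ref{lem:two-local-geometry} (star-shapedness and the dual inequalities~\eqref{eq:two-dual-geometry}), the residual $x-y$ is itself constrained. By Lemma~\ref{lem:two-direction-compatibility}, every direction available to $j$ is nonobtuse, in the $J_p$ sense, against every direction in $D_i$. Together these should show that a missing point $x$ forces a direction $u$ into $D_i(q)$, roughly the one pointing from $q$ along $-(x-y)$ or its $J_p$-dual. If I choose $x$ on the side where $\langle w,\cdot\rangle>0$, and then on the opposite side, the two forced directions contradict pointedness. The conclusion I aim for is that $S_j(q)$ contains a ball $B(q,\delta)$, i.e.\ $D_j(q)$ is the whole unit sphere.

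\emph{Step 2 (local exact service spreads).} Suppose $S_j(q)\supseteq B(q,\delta)$. Then $g(c',q)$ is controlled for $c'$ near $q$. By Lemma~\ref{lem:two-direction-compatibility}, once $D_j(q)$ is the full sphere, every $u\in D_i(q)$ satisfies $\langle J_p(u),v\rangle\ge0$ for all $v$. Taking $v=-u$ shows $D_i(q)=\varnothing$, so agent $i$ has no local directions at $q$. I would then transport this to nearby centres $q'$ using Lemma~\ref{lem:two-normal-transfer} and the continuity argument in the proof of Lemma~\ref{lem:two-direction-compatibility}. The aim is to show that $C_i(q')$ remains $\{0\}$, hence pointed, for $q'$ near $q$. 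Step 1 then applies at $q'$, so the set $U=\{q'\ne a: C_i(q')\text{ pointed}\}$ is open.

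Closedness of $U$ in $\mathbb{R}^d\setminus\{a\}$ should come from closedness of menus. If $S_j(q_m)$ contains balls around $q_m\to q'$ whose radii are bounded below by a multiple of $r(q_m)$, then the limit menu is full near $q'$. For this I need the radius $\delta$ in Step 1 to be proportional to $r(q)$, which I would extract by scale invariance of the local estimates. Since $\mathbb{R}^d\setminus\{a\}$ is connected for $d\ge2$, $U$ is everything, and exact local service holds at every $q\ne a$.

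\emph{Step 3 (globalizing).} Local exact service near every centre gives $b\in S_j(b)$-type statements for nearby reports only. To reach arbitrary $c$ with fixed $b$, I would use the nearest-point property of Lemma~\ref{lem:two-menus}. If $c\notin S_j(b)$, take a nearest point $y$ and apply segment persistence. This yields a profile whose fixed report is $y$ and whose other report is close to $y$ but not served exactly, contradicting Step 2. The cases $y=a$ and $b=a$ are handled using $a\in S_j(b)$, the approximation bound, and closedness.

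I expect Step 1 to be the main obstacle: converting pointedness of agent $i$'s cone into fullness of agent $j$'s menu. The $J_p$ nonlinearity prevents a simple polar-cone argument. I would try first to reduce to $p=2$-style reasoning, using only the signs of $\langle J_p(\cdot),\cdot\rangle$ and the fact that $J_p$ is an odd homeomorphism of the sphere preserving these signs.
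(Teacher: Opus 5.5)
\textbf{There is a genuine gap in Step 1.} You flagged it as the main obstacle, but the problem is more than technical. The conclusion you aim for, $S_j(q)\supseteq B(q,\delta)$, cannot be obtained by combining the local properties you list near $q$.

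Take $p=2$, keep agent $k$'s facility at $a$, and let the close pair near $q$ be served by $g(b,c)=(\max\{b_1,c_1\},\min\{b_2,c_2\})$. This is the Euclidean projection of each agent's report onto a translate of the quadrant $Q=\{v:v_1\ge0,\ v_2\le0\}$, so it is a strategyproof two-agent rule. Near $q$ both menus equal $q+Q$. Hence $C_i(q)=Q$ is pointed, yet $S_j(q)$ contains no ball around $q$.

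This local structure satisfies everything you invoke:
\begin{itemize}
\item $\langle b-g(b,c),c-g(b,c)\rangle=0$ in every case;
\item the menus are star-shaped;
\item normal transfer holds, because normals to translates of $Q$ lie in $Q^\circ=-Q$;
\item directions in the right-angled cone $Q$ are pairwise nonobtuse.
\end{itemize}
In particular, for a missing point $x$ with nearest point $y$, your forced direction $-(x-y)$ always lies in $Q$ itself, whichever side of $w$ you put $x$ on. So no contradiction with pointedness ever arises.

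What excludes this structure is the global bound $\operatorname{dist}(x,S_j(c))\le\rho\|a-c\|$ of Lemma~\ref{lem:two-menus}. Along rays leaving $q$ in directions interior to $Q^\circ$, the distance to $c+Q$ grows linearly. Step 1 works only near $q$ and never uses this bound, so it cannot succeed.

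Step 2 therefore has nothing to build on. Its closedness argument also assumes a continuity of $c\mapsto S_j(c)$ that no lemma provides, and the transport of $C_i(q')=\{0\}$ to nearby centres is unsupported. Your Step 3 is sound in outline: local exact service at every centre $y\ne a$ would globalize via nearest points, segment persistence, and connectedness of $\mathbb{R}^d\setminus\{a\}$. But it presupposes exactly what Steps 1--2 fail to deliver.

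\textbf{The missing idea is propagation along rays combined with a scaling argument.}
\begin{enumerate}
\item What pointedness gives locally is not fullness of $S_j(q)$. It gives an open set $U$ of unit directions, with $J_p(U)$ inside the interior of the polar of $C_i(q)$, such that $q$ is the unique nearest point of $S_i(q)$ to $q+tu$. That is, $g(q+tu,q)=q$.
\item Using the strict inequality~\eqref{eq:two-strict-step} (with constant $A_p$) together with~\eqref{eq:two-dual-geometry}, the paper pushes this inner-endpoint choice along each ray: $g(q+(t+s)u,q+tu)=q+tu$ for all $t\ge0$. Hence $q-y\in C_j(y)$ on the open cone $\Gamma=\{q+tu:t>0,\ u\in U\}$.
\item Lemma~\ref{lem:two-normal-transfer} then makes $t\mapsto\operatorname{dist}(q+tu,S_j(c))^p/t^p$ nondecreasing whenever nearest points lie in $\Gamma$. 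The uniform bound $\rho^p\|a-c\|^p$ then forces this quantity to vanish, so $\Gamma\subseteq S_j(c)$ for every $c$.
\item For $q_0\in\Gamma$, agent $j$ is served exactly at $q_0$ against every report. This supplies the initial endpoint choice in every direction, and repeating the argument gives $S_j(c)=\mathbb{R}^d$.
\end{enumerate}
The unbounded propagation and the uniform $\rho\|a-c\|$ bound are precisely what your local strategy lacks.
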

\begin{proof}
We first show that selecting the inner report persists as both reports move along a ray.
We then use this property to show that the other agent's menus contain an open cone and, finally, the whole space.
Propagation of endpoint choices on separated line instances is used in~\cite[Section 2.1 and Appendix A]{fotakis2014power}.
Here the common facility can leave the line of the reports, so we first establish a norm inequality controlling that possibility.
We need the following elementary strict inequality: there is a constant $A_p\ge1$ such that
\begin{equation}\label{eq:two-strict-step}
    \langle J_p(x),h\rangle\ge0,\quad h\ne0,\quad 0<\theta<\frac{1}{A_p}
    \quad\Longrightarrow\quad \|x-\theta h\|<\|x+h\|.
\end{equation}
To prove this, set $R(x,h)=\|x+h\|^p-\|x\|^p-p\langle J_p(x),h\rangle$.
Strict convexity gives $R(x,h)>0$ for $h\ne0$.
In one coordinate, the ratio $R(s,-1)/R(s,1)$ is continuous in $s$, has positive denominator, and tends to $1$ as $s\to\pm\infty$.
It therefore has a finite supremum $A_p$.
Scaling each coordinate and summing gives $R(x,-h)\le A_pR(x,h)$ in every dimension.
Convexity in the step length gives $R(x,-\theta h)\le\theta A_pR(x,h)$ for $0\le\theta\le1$.
Comparing the linear terms and using $\theta A_p<1$ proves~\eqref{eq:two-strict-step}.

Consider a unit direction $u$ whose ray $q+tu$, $t\ge0$, stays a positive distance from $a$.
Choose a step $\delta>0$ smaller than a fixed fraction of $r$ everywhere along this ray.
Suppose $g(q+\delta u,q)=q$, with agent $i$ at the outer report.
At a current inner report $b$, put $c=b+\delta u$ and assume $g(c,b)=b$.
Moving the inner report to $b'=b+\eta u$, where $0<\eta<\frac{\delta}{A_p+1}$, selects $b'$: the menu for agent $j$ contains $b,c$, and its local star-shapedness contains their entire segment.
Next move the outer report from $c$ to $c'=b+(\delta+\eta)u$, and write $y=g(c',b')$.
If $y\ne b'$, apply~\eqref{eq:two-strict-step} with $x=c'-y$, $h=y-b'$, and $\theta=\frac{\eta}{\delta-\eta}$.
The premise on the dual inner product follows from~\eqref{eq:two-dual-geometry}.
Since $x+h=\delta u$ and $c-y=(1-\frac{\eta}{\delta})(x-\theta h)$, we obtain
\[
    \|c-y\|<\left(1-\frac{\eta}{\delta}\right)\delta=\delta-\eta.
\]
This is a beneficial deviation for agent $i$, whose previous truthful cost was $\delta-\eta$.
Thus $y=b'$, and the endpoint selection persists after translating both reports by $\eta u$.
The step size can be chosen uniformly along the ray because its distance from $a$ has a positive lower bound.
Repeating steps of size less than $\frac{\delta}{A_p+1}$, with a shorter final step if needed, reaches any specified distance along the ray.
Segment persistence then shortens the separation, giving
\begin{equation}\label{eq:two-ray-selection}
    g(q+(t+s)u,q+tu)=q+tu\qquad(t\ge0,\ 0<s\le\delta).
\end{equation}

Because $C_i(q)$ is pointed, its polar cone $\{w:\langle w,v\rangle\le0\text{ for all }v\in C_i(q)\}$ has nonempty interior.
The map $J_p$ is a continuous bijection with continuous inverse.
We can therefore choose a small open set of unit directions $U$ whose images under $J_p$ lie in that interior.
We may shrink $U$ so that its closure avoids the direction from $q$ to $a$.
For $u\in U$ and small $t>0$, the unique nearest point of $S_i(q)$ to $q+tu$ is $q$.
Indeed, for every nonzero local menu displacement $v$, convexity gives
\[
    \|tu-v\|^p\ge\|tu\|^p-p\langle J_p(tu),v\rangle>\|tu\|^p,
\]
and menu points outside the local ball are farther away when $t$ is small.
This also holds when $C_i(q)=\{0\}$, since there are no other local menu points.
The rays in this smaller set stay a uniform positive distance from $a$, so a common sufficiently small step in~\eqref{eq:two-ray-selection} works for all of them.
Let $\Gamma=\{q+tu:t>0,\ u\in U\}$.
For $y=q+tu\in\Gamma$, choose $0<s<\min\{t,\delta\}$.
Equation~\eqref{eq:two-ray-selection} gives $g(y,y-su)=y-su$.
Thus, with agent $i$ fixed at $y$, agent $j$ can induce points in the direction towards $q$.
Hence
\[
    q-y\in C_j(y)\qquad(y\in\Gamma).
\]

Fix any report $c$ of agent $i$, and write $S=S_j(c)$.
Along an interior ray of $\Gamma$, set $h(t)=\operatorname{dist}(q+tu,S)^p$.
Lemma~\ref{lem:two-menus} bounds $h$ by $\rho^p\|a-c\|^p$.
Whenever $h(t)>0$ and a nearest menu point $y_t$ lies in $\Gamma$, Lemma~\ref{lem:two-normal-transfer} gives $\langle J_p(q+tu-y_t),y_t-q\rangle\ge0$.
At differentiability points, it follows that
\[
    \frac{t}{p}h'(t)
    =h(t)+\langle J_p(q+tu-y_t),y_t-q\rangle\ge h(t).
\]
As in the star-shapedness argument, absolute continuity makes $h(t)/t^p$ nondecreasing on any interval where these conditions hold.
We use this monotonicity to rule out $h(t)>0$.

Suppose $h(t_0)>0$.
If the ray first returns to $S$ at a finite time $T>t_0$, then $h(t)>0$ for $t_0\le t<T$ and $h(T)=0$.
For any nearest point $y_t$, the bound $\|y_t-(q+Tu)\|\le h(t)^{1/p}+T-t$ shows that all nearest points tend to $q+Tu$ as $t\uparrow T$.
They therefore lie in $\Gamma$ just before $T$.
The preceding monotonicity would force a positive value of $h(t)/t^p$ to remain positive at $T$, a contradiction.
If the ray never returns, all nearest points lie in $\Gamma$ for sufficiently large $t$: their distances to the ray are uniformly bounded, whereas the ray's distance from the boundary of $\Gamma$ grows linearly.
On this tail, $h(t)/t^p$ is positive and nondecreasing, but is bounded above by $\rho^p\|a-c\|^p/t^p\to0$, again a contradiction.
Consequently $\Gamma\subseteq S_j(c)$ for every $c$.

Choose $q_0\in\Gamma$.
Agent $j$ is exactly served at $q_0$ against every report of agent $i$.
In particular, for every unit direction $u$ and sufficiently small $s>0$, the separated profile $(q_0+su,q_0)$ has common facility $q_0$.
This supplies the initial endpoint choice for ray propagation in every direction.
Repeat the ray propagation and menu-filling argument for any direction other than the one from $q_0$ towards $a$, taking a small surrounding set of directions each time.
This fills a dense subset of $\mathbb{R}^d$ in every $S_j(c)$.
Closedness gives $S_j(c)=\mathbb{R}^d$, so SP forces exact service at every report of agent $j$.
\end{proof}

\subsection{Lower Bounds in the Plane}\label{sec:two-planar}

\begin{theorem}\label{thm:two-planar}
For every $1<p<\infty$ and every $n\ge3$, no deterministic SP two-facility mechanism in $\ell_p^2$ has a finite social-cost approximation ratio.
\end{theorem}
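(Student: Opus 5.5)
By Lemma~\ref{lem:two-replication} it suffices to treat three agents. So assume for contradiction that a three-agent SP mechanism $f$ in $\ell_p^2$ has finite ratio $\rho$. The plan is to show that, for every choice of fixed agent $k$ and fixed location $a$, some pair of agents is forced into exact service. We then combine these conclusions across different choices of $k$ to build a profile with positive optimum on which the mechanism pays far more than $\rho\,\OPT_2$.

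\textbf{Step 1 (a pointed cone exists).} Fix $k$ at $a$ and a point $q\ne a$. If $C_i(q)$ or $C_j(q)$ is pointed, Lemma~\ref{lem:two-pointed} gives exact service for the other agent. Otherwise both cones are closed convex non-pointed cones in the plane, so each is a line, a half-plane, or all of $\mathbb{R}^2$. Lemma~\ref{lem:two-direction-compatibility}, extended to conical combinations, gives $\langle J_p(u),v\rangle\ge0$ for all $u\in C_i(q)$ and $v\in C_j(q)$. Since $J_p$ is odd, a line in $C_i(q)$ containing $\pm u$ forces $\langle J_p(u),v\rangle=0$ for every $v\in C_j(q)$, so $C_j(q)$ lies in a line. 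The symmetric argument gives the same for $C_i(q)$. This leaves only the \emph{complementary lines} case: $C_i(q)=\mathbb{R}u$ and $C_j(q)=\mathbb{R}v$ with $\langle J_p(u),v\rangle=\langle J_p(v),u\rangle=0$. This is the planar obstruction the section announced.

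\textbf{Step 2 (rule out complementary lines).} Here I would exploit that the menus are locally one-dimensional. On the separated profiles $(q+tu',q)$ with $u'$ off the line $\mathbb{R}u$, the nearest point of $S_i(q)$ lies on $q+\mathbb{R}u$ near $q$. Choosing $u'$ in the open quadrant where $J_p(u')$ pairs negatively with both $\pm$-directions is impossible, so the facility must be the projection onto the line. That projection moves only along $u$.

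Now vary $q$ continuously. Lemma~\ref{lem:two-normal-transfer} applied at nearby centres should show that the direction field $q\mapsto u(q)$ is locally constant, by an argument analogous to the Euclidean frame argument of Lemma~\ref{lem:single-three-groups}. Then consider two close reports $b,c$ differing along $v$. Agent $i$ can only move the facility along $u$, and agent $j$ only along $v$. Combined with~\eqref{eq:two-dual-geometry}, this forces $g(b,c)$ to be the ``coordinate median'' corner. I then plan to propagate along a long ray in direction $u+v$, as in the proof of Lemma~\ref{lem:two-pointed}. Either the facility eventually leaves to distance comparable to $\|a-q\|$, violating the bound in Lemma~\ref{lem:two-menus}, or one agent gains by a deviation through the strict step inequality~\eqref{eq:two-strict-step}. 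The aim is a contradiction, so that some cone is pointed at every $q$. I expect this step to be the main obstacle, particularly obtaining the rigidity of the line field without Euclidean projections.

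\textbf{Step 3 (global contradiction).} Step~1 now gives, for each $k$ and $a$, an agent $\sigma(k,a)\ne k$ that is always served at her report while $k$ sits at $a$. By continuity in $a$ and closedness of menus, $\sigma(k,\cdot)$ is constant. Apply this with $k=1$ and $k=2$. Either two different agents are both exactly served on a common set of profiles, or one agent $m$ is exactly served against every report of the others. In the latter case, take a profile with $m$ isolated and the other two reports at distance $D$ from each other and far from $m$. Only one facility remains for the other two, but their cost can be bounded only if the remaining facility also serves them. To force the contradiction, I would use the roles $k\ne m$ to show that a second agent is also exactly served. Then the third agent, placed far from both, has cost unbounded relative to $\OPT_2\le D$. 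Some care is needed to choose the roles so that two distinct agents are forced; I would do this by checking the cases of $\sigma$ explicitly.
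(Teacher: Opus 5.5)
\textbf{There is a genuine gap in Step 2.} Everything you propose there works inside a single restriction with agent $k$ fixed at $a$, using only the incentives of $i$ and $j$: local constancy of the line field, the corner formula for $g(b,c)$, and ray propagation via \eqref{eq:two-strict-step}. The complementary-lines structure is consistent inside such a restriction, so no argument of this kind can reach a contradiction.

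Your dichotomy fails at both horns. The corner $g(b,c)=c+P_{E_i}(b-c)=b+P_{E_j}(c-b)$ stays within $\|b-c\|$ of both close agents, so it never drifts to distance comparable to $\|a-q\|$. The propagation in Lemma~\ref{lem:two-pointed} needs an endpoint selection to start, but along your direction $u+v$ you get $g(q+\delta(u+v),q)=q+\delta u$, which is not an endpoint.

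In fact the whole restriction is realizable. Take $p=2$, $a=0$, $E_i$ horizontal and $E_j$ vertical. Set $T_i(c)=\{z:z_2\in c_2+\|c\|\mathbb{Z}\}\cup\{0\}$ and $T_j(b)=\{z:z_1\in b_1+\|b\|\mathbb{Z}\}\cup\{0\}$, with $T_i(0)=T_j(0)=\mathbb{R}^2$. Output a nearest point of $T_i(c)$ to $b$ and a nearest point of $T_j(b)$ to $c$, adding $0$ when they coincide. Then:
\begin{itemize}
\item Every output point lies in $T_i(c)\cap T_j(b)$, so neither $i$ nor $j$ can gain by misreporting.
\item The local menus are exactly the segments $(q+E_i)\cap\overline B(q,r(q))$ and $(q+E_j)\cap\overline B(q,r(q))$.
\item A short computation bounds the social cost by $6\min\{\|b\|,\|c\|,\|b-c\|\}$.
\end{itemize}
So the contradiction must use agent $k$'s incentives, that is, it must compare restrictions in which different agents are fixed.

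\textbf{The missing idea is the line-transfer argument.} Lemma~\ref{lem:two-line-transfer} shows that if $c-a\in E_i^k(a)\setminus\{0\}$, then agent $i$'s menu against the reports $a,c$ contains the whole segment $(a,c]$. The reason is that the $p$th power of the distance to the menu has zero derivative along $E$ wherever nearest menu points avoid $a$. Viewing this menu in the restriction that fixes $j$ at $c$ excludes both exactly-served alternatives there: one by the local line structure near $c$, the other by letting $z\in(a,c]$ tend to $a$. That restriction is therefore of line type with $E_i^j(c)=E$. Chaining such transfers along $e\in E$ and $v\in F$ in two different orders gives both $E_3^2(a+e+2v)=F$ and $E_3^2(a+e+2v)=E$, which is a contradiction.

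\textbf{Two smaller points.}
\begin{itemize}
\item In Step~1, the inequality of Lemma~\ref{lem:two-direction-compatibility} extends linearly only in the second slot. You should therefore use $\langle J_p(v),\pm e\rangle\ge0$ for $v\in D_j(q)$, rather than placing elements of $C_i(q)$ inside $J_p$.
\item In Step~3, the claim that $\sigma(k,\cdot)$ is constant ``by continuity'' is unsupported, since no continuity is available for two facilities. It is also unnecessary. Fix agent $3$ at $a$ with agent $1$ always served, then fix agent $1$ at $b\ne a$. If agent $3$ is always served there, $f(b,x_2,a)=\{a,b\}$, and $x_2\to\infty$ breaks the ratio. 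If agent $2$ is, $f(b,x_2,a)=\{b,x_2\}$, and $x_2\to b$ breaks it.
\end{itemize}
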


In the plane, if neither local direction cone is pointed, the two agents must control complementary lines.
The following lemma makes this statement precise and shows that the lines cannot vary with the location of the close pair.

\begin{lemma}\label{lem:two-planar-lines}
Fix agent $k$ at $a$.
Either one of the other agents is exactly served at every profile with this fixed report, or there are fixed distinct lines $E_i,E_j$ through zero such that
\[
    \langle J_p(e),v\rangle=\langle J_p(v),e\rangle=0
    \qquad(e\in E_i,\ v\in E_j),
\]
and, for every $q\ne a$, the local menu of agent $i$ is the full line segment
\[
    S_i(q)\cap\overline B(q,r(q))=(q+E_i)\cap\overline B(q,r(q)).
\]
The analogous identity holds for agent $j$ with line $E_j$.
\end{lemma}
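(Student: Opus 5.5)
By Lemma~\ref{lem:two-pointed}, we may assume that neither $C_i(q)$ nor $C_j(q)$ is pointed for any $q\ne a$; otherwise one agent is exactly served and the first alternative holds. In the plane, a closed convex cone that is not pointed is a full line, a closed half-plane, or all of $\mathbb{R}^2$. The plan has three steps: show that each cone is exactly a line, show that the two lines satisfy the duality relations, and show that the lines do not depend on $q$. After that we upgrade from cones to actual local menus.

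\textbf{Each cone is a line.} Suppose $C_i(q)$ contains a half-plane $\{v:\langle w,v\rangle\ge0\}$. Since $C_j(q)$ is not pointed, it contains a line $\pm v_0$. Lemma~\ref{lem:two-direction-compatibility}, extended to conical combinations, gives $\langle J_p(u),v_0\rangle\ge0$ and $\langle J_p(u),-v_0\rangle\ge0$ for all $u\in D_i(q)$. Hence $\langle J_p(u),v_0\rangle=0$ on $D_i(q)$. Because $J_p$ is a homeomorphism of the unit circle onto rays, the set $\{u:\langle J_p(u),v_0\rangle=0\}$ consists of at most two antipodal directions. The directions $D_i(q)$ must generate the half-plane, so $D_i(q)$ cannot be contained in such a set, a contradiction. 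Therefore $C_i(q)=E_i(q)$ and $C_j(q)=E_j(q)$ are lines. The same computation, applied symmetrically with the other inequality of~\eqref{eq:two-direction-compatibility}, gives $\langle J_p(e),v\rangle=\langle J_p(v),e\rangle=0$ for $e\in E_i(q)$ and $v\in E_j(q)$. In particular the lines are distinct, since $\langle J_p(e),e\rangle=\|e\|^p\ne0$. Because $C_i(q)$ is a line and $D_i(q)$ is closed, $D_i(q)$ consists of exactly the two unit vectors $\pm e$. Star-shapedness (Lemma~\ref{lem:two-local-geometry}) then places the local menu inside $(q+E_i)\cap\overline B(q,r(q))$.

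\textbf{Full segments.} Next we show that the local menu fills the whole segment. The menu contains points $q+te$ for small $t>0$ and points $q-se$ for small $s>0$. The claim is that it contains $q+te$ for all $|t|\le r(q)$. Suppose instead that $t^*=\sup\{t\le r(q): q+te\in S_i(q)\}<r(q)$. Then $q+t^*e$ belongs to the menu and lies in the local ball, and star-shapedness is consistent with this. I would derive a contradiction by using Lemma~\ref{lem:two-normal-transfer} at the endpoint $y=q+t^*e$. Take $x$ slightly beyond $y$ along $e$. The nearest menu point to $x$ is $y$, so every $v\in C_i(y)$ satisfies $\langle J_p(e),v\rangle\le0$. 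But $C_i(y)$ is itself a line $E_i(y)$, so $\langle J_p(e),v\rangle=0$ on $E_i(y)$, which forces $E_i(y)\ne E_i(q)$. This reduces the segment claim to the constancy of the lines.

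\textbf{Constancy of the lines (main obstacle).} I expect this step to be the hardest. To show $E_i(q)$ is locally constant, consider nearby centers $q'=q+te$ with $e\in E_i(q)$. We have $q'\in S_i(q)$, and the profile with reports $q',q$ has common facility $q'$. Agent $j$, if truly located at $q'$, can report $q$ and be served at zero cost. Combining this with the transfer argument of Lemma~\ref{lem:two-normal-transfer}, applied from both sides of $q'$ along $E_j$, I aim to show $E_j(q')\supseteq$ the directions orthogonal (in the $J_p$ sense) to $e$. This would give $E_j(q')=E_j(q)$, and hence $E_i(q')=E_i(q)$ by the duality relations.

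Doing the same along $E_j$ shows the pair of lines is unchanged under small moves in two independent directions. So the lines are locally constant on $\mathbb{R}^2\setminus\{a\}$, which is connected, and therefore constant. If the transfer argument alone is insufficient, I would fall back on a continuity argument: show that $q\mapsto E_i(q)$ is upper semicontinuous using closedness of the menus and the uniform radius $r(q)$. Then I would rule out rotation using the rigidity of the finite set of lines compatible with the duality relations. With constancy established, the full-segment argument above yields a contradiction at any endpoint $t^*<r(q)$, giving the stated local menu identity for both agents.
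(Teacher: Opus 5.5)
\textbf{Steps that are sound.} Your first step and your endpoint argument are sound and essentially the paper's. The compatibility inequalities of Lemma~\ref{lem:two-direction-compatibility} force both cones to be $J_p$-orthogonal lines. Lemma~\ref{lem:two-normal-transfer}, applied with $x=y+\varepsilon e$ just beyond an endpoint $y=q+t^*e$, gives the contradiction once $E_i(y)=E_i(q)$ is known.

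\textbf{The gap.} It lies in the constancy step and in the order of the argument. You infer local constancy from invariance of the lines under ``small moves'' along $E_i(q)$ and $E_j(q)$. But before full segments are established, the admissible move length at a point is only the length of its local menu segment. You have no lower bound for this length, and it may shrink from point to point. Invariance along two arbitrarily short crosses through each point does not by itself produce a neighborhood on which the lines are constant. Since you also made the full-segment claim depend on constancy, the argument as written is circular. Your fallback does not close the gap. For $p=2$, every perpendicular pair satisfies the duality relations, so these relations give no rigidity. You also give no argument for the claimed semicontinuity of $q\mapsto E_i(q)$.

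\textbf{The missing observation.} Your move argument uses only $q'\in S_i(q)\cap\overline B(q,r(q))$, so it applies directly at the endpoint $y$. This is what the paper does. Since $g(y,q)=y$, the point $y$ is a nearest point of $S_j(y)$ to agent $j$'s report $q$. Testing both directions of the line $E_j(y)$ gives $\langle J_p(e),v\rangle=0$ on $E_j(y)$. Hence $E_j(y)=E_j(q)$ and $E_i(y)=E_i(q)$, and your endpoint contradiction goes through without any global information. Your two-sided use of Lemma~\ref{lem:two-normal-transfer} at $y$, with $x=y\pm\varepsilon v$ and $v\in E_j(q)$, works equally well.

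\textbf{Finishing, and the paper's route.} Once you have full segments of length $r(q)$, which is continuous and positive on $\mathbb{R}^2\setminus\{a\}$, your two-step move does sweep out a parallelogram around $q$. Move first along $E_i$, then along $E_j$ from the intermediate point, whose lines agree with those at $q$. Connectedness then finishes. The paper reaches local constancy more directly. For close $b,c$, the common facility equals both $c+P_{E_i(c)}(b-c)$ and $c+P_{E_i(b)}(b-c)$, because projection onto a $J_p$-orthogonal pair of lines is linear and complementary. This forces $E_i(b)=E_i(c)$ for all nearby pairs at once, with no path argument.
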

\begin{proof}
If some direction cone is pointed, Lemma~\ref{lem:two-pointed} gives the first alternative.
Otherwise both $C_i(q),C_j(q)$ contain a line at every $q\ne a$.
Choose a nonzero vector $e$ on a line contained in $C_i(q)$.
For each $v\in D_j(q)$, Lemma~\ref{lem:two-direction-compatibility}, extended linearly in its second argument, gives $\langle J_p(v),e\rangle=0$.
In two dimensions, $e^\perp$ is a line.
Since $J_p$ is an odd homogeneous bijection, the inverse image $J_p^{-1}(e^\perp)$ is also a line.
Since $C_j(q)$ contains a line, it must equal this line.
Its generating directions include both signs, so applying the same inequalities again forces $C_i(q)$ to be a line as well.
Denote them by $E_i(q),E_j(q)$.
Both dual orthogonality identities follow, and the lines are distinct because $\langle J_p(e),e\rangle>0$ for $e\ne0$.

Star-shapedness and closedness show that each local menu is a closed interval on its line, with positive length on both sides of $q$.
We show that it fills the prescribed local ball.
Suppose, for a unit $e\in E_i(q)$, its positive side ends at $z=q+\ell e$ with $0<\ell<r(q)$.
At reports $z,q$, the common facility is $z$.
Agent $j$ at $q$ therefore has nearest menu point $z$ in $S_j(z)$.
Testing small displacements in both directions of $E_j(z)$ gives $\langle J_p(q-z),v\rangle=0$ for $v\in E_j(z)$.
Thus $E_j(z)=\ker J_p(e)=E_j(q)$ and $E_i(z)=E_i(q)$.
For sufficiently small $\varepsilon>0$, the point $z$ is the unique nearest point of $S_i(q)$ to $x=z+\varepsilon e$.
The local interval ends at $z$, while points outside the local ball are excluded by choosing $\varepsilon<\frac{r(q)-\ell}{2}$.
Lemma~\ref{lem:two-normal-transfer}, applied with nearest point $z$, now requires $\langle J_p(\varepsilon e),e\rangle\le0$, a contradiction.
The negative side and the other agent are treated in the same way.

Projection onto these mutually orthogonal lines is linear and complementary.
Indeed, for nonzero $e\in E_i(q),v\in E_j(q)$, a vector written as $\alpha e+\beta v$ projects onto $E_i(q)$ at $\alpha e$: the derivative in direction $e$ vanishes at residual $\beta v$, and strict convexity gives uniqueness.
The other projection is $\beta v$.
For sufficiently close $b,c$ near any fixed $q\ne a$, the common facility therefore has both representations
\[
    g(b,c)=c+P_{E_i(c)}(b-c)=b+P_{E_j(b)}(c-b)
    =c+P_{E_i(b)}(b-c).
\]
If the common projection of $b-c$ is nonzero, it spans both $E_i(b)$ and $E_i(c)$.
If it is zero and $b\ne c$, the vector $b-c$ spans both complementary kernel lines, which also determines $E_i(b)=E_i(c)$.
Thus the lines are locally constant.
Since $\mathbb{R}^2\setminus\{a\}$ is connected, they are constant throughout this domain.
\end{proof}

We now compare the line structures obtained by fixing different agents.
The next lemma transfers one agent's line to a new fixed-report problem.
Following two different sequences of transfers will then give incompatible lines for the same problem.
To compare these choices, write $E_i^k(a)$ for the line of agent $i$ when agent $k$ is fixed at $a$.
This notation is only used for restrictions of the second type in Lemma~\ref{lem:two-planar-lines}.

\begin{lemma}\label{lem:two-line-transfer}
Suppose $E_i^k(a)=E$, and let $c-a\in E\setminus\{0\}$.
Then the restriction fixing agent $j$ at $c$ is also of the second type, and $E_i^j(c)=E$.
\end{lemma}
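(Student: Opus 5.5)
The plan is to exhibit, for the restriction with agent $j$ fixed at $c$, a local menu of agent $i$ that coincides with the line $c'+E$ near suitable centers. The second alternative of Lemma~\ref{lem:two-planar-lines} then follows because neither cone can be pointed, and the line is $E$. The bridge between the two restrictions is the three-agent profile $(a,b,c)$ with agent $k$ at $a$ and $j$ at $c$. It is read in two ways: with $k$ fixed at $a$, it lies in the original problem, and with $j$ fixed at $c$, agents $k$ and $i$ form the close pair in the new problem.

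First I use the structure at $a$. By Lemma~\ref{lem:two-planar-lines}, for $q\ne a$ near $c$, agent $i$'s local menu in the problem fixing $k$ at $a$ is $(q+E)\cap\overline B(q,r(q))$. The separated-pair facility with reports $b$ (agent $i$) and $c$ (agent $j$) is $g(b,c)=b+P_{E_j}(c-b)$, which lies on the line $c+E$. Because $c-a\in E$, the points $a$ and $c$ lie on a common line parallel to $E$.

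I then apply the hypotheses to profiles where $a$ and $b$ are close and both are far from $c$; in these profiles $k$ and $i$ are the close pair when $j$ is fixed at $c$. Take $b=a+te$ with $e\in E$ and $t$ small. Profiles of the form $(a,a+te,c)$ are collinear along $a+E$. Using the structure from the $a$-fixed problem (agent $j$ moving toward $a$ along $E$ keeps $c$'s facility on the line) together with segment persistence from Lemma~\ref{lem:two-menus}, I will show that the common facility of $k$ and $i$ stays on the line $a+E$. I will also show that agent $i$ can induce every point of $a+E$ near $a$: with $j$ at $c$ and $k$ at $a$, agent $i$ reporting $a+te$ receives a facility at $a+te$. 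This is an exact-service statement. I intend to derive it from $g(b,c)=b$ whenever $c-b\in E_j$ fails, combined with a limit in which $c$ slides along $E$. This step shows that $D_i^j(a)$ contains $\pm E$.

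Next, I rule out the first alternative of Lemma~\ref{lem:two-planar-lines} for the problem fixing $j$ at $c$. If agent $i$ were always served exactly there, then in the profile $(a,b,c)$ with $b$ close to $c$ and off the line $c+E$, agent $i$ would receive a facility at $b$. Agent $j$ at $c$ would then pay at least the distance to the line through $b$, which contradicts $g(b,c)=b+P_{E_j}(c-b)$ from the $a$-problem. If instead agent $k$ were always served exactly, a similar contradiction comes from agent $j$'s view. Here agent $k$ must be served at her report regardless of $i$, while choosing $b$ far from $a$ and close to $c$ puts the other facility at $g(b,c)$, so the two facilities serve the three agents as prescribed; I will check that exact service of $k$ is incompatible with the $E_i^k$-line structure by a direct SP deviation of $i$ along $E_j$. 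Once the second type holds, the line $E_i^j(c)$ contains the directions $\pm E$ found above, and since it is a single line, $E_i^j(c)=E$.

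The main obstacle is the middle step. It requires converting information about agent $i$'s menus when $c$ is the close partner into menus when $a$ is the close partner. The first move I would try is Lemma~\ref{lem:two-normal-transfer} together with the dual orthogonality of $E_i$ and $E_j$.
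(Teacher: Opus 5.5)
Your outer logic matches the paper's: rule out both exact-service alternatives for the restriction fixing $j$ at $c$, then identify the line from menu points near $a$. Your exclusion of ``agent $i$ always exactly served'' is sound, since it would force $g(b,c)=b$ for close $b,c$ with $c-b\notin E$. The gap is the step you defer as the main obstacle. It is the actual content of the lemma, and the route you sketch for it does not work.

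\begin{itemize}
\item The formula you plan to use is wrong. For a close separated pair, $g(b,c)=c+P_E(b-c)$ equals $b$ exactly when $c-b\in E$, not ``whenever $c-b\in E_j$ fails.''
\item More importantly, $g$ and the line menus of Lemma~\ref{lem:two-planar-lines} only describe profiles where agent $i$'s report lies within $r(q)$ of agent $j$'s report $q$, far from $a$. The profile $(a,a+te,c)$ puts agent $i$ next to $a$, where none of this local information applies.
\item Sliding $c$ along $E$ does not bridge this, because it changes the menu. You learn about $S_i(c')$ near $c'$, not about the fixed menu $S_i(c)$ near $a$. SP of agent $j$ between $c$ and $c'$ constrains $j$'s cost, not $i$'s menu.
\item The same gap undermines your second exclusion case. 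Without knowing that agent $i$ can induce points near $a$, exact service of $k$ is consistent with everything the $k$-fixed restriction tells you near $a$. The proposed ``direct SP deviation of $i$ along $E_j$'' then has no evident target.
\end{itemize}

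The missing idea is a global propagation inside the single menu $S_i(c)$ (with $k$ at $a$) along the segment $(a,c]$, which lies on $c+E$.

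\begin{itemize}
\item Since $C_i(y)=E$ is a line for every $y\ne a$, Lemma~\ref{lem:two-normal-transfer} applied to both $e$ and $-e$ gives $\langle J_p(x-y),e\rangle=0$. This holds whenever $y\ne a$ is a nearest point of $S_i(c)$ to $x$ and $e\in E$.
\item Hence $\operatorname{dist}(\cdot,S_i(c))^p$ has zero derivative along $E$. At positive distance this follows from the nearest-point derivative formula; at zero distance it holds because $p>1$.
\item Near a zero-distance point $z\ne a$, nearest points satisfy $\|y-z\|\le2\|x-z\|$, so they avoid $a$. Thus the zero set on $(a,c]$ is relatively open. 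It is also closed and contains $c$, so $(a,c]\subseteq S_i(c)$.
\end{itemize}

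This settles your second case directly. If $k$ were always served, then $f(a,z,c)=\{a,z\}$ for $z\in(a,c]$ near $a$. Agent $j$ would pay about $\|c-a\|$, while the optimum is at most $\|z-a\|$.

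It also gives the line. Once the second type is known, the single menu direction $c-a$ at center $a$ forces $E_i^j(c)=E$. You do not need to produce both directions $\pm E$ beforehand.
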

\begin{proof}
Consider agent $i$'s menu with the other reports fixed at $a,c$.
In the restriction fixing $k$ at $a$, Lemma~\ref{lem:two-normal-transfer} says that every nearest-point residual with nearest point $y\ne a$ satisfies $\langle J_p(x-y),e\rangle=0$ for $e\in E$.
This forces the menu to contain the entire segment $(a,c]$.
To prove this, consider the $p$th power of the distance to the menu along any compact subsegment of $(a,c]$.
Near a zero-distance point $z\ne a$, every nearest point $y$ satisfies $\|y-z\|\le2\|x-z\|$, because $z$ itself belongs to the menu.
Thus these nearest points stay away from $a$.
At positive distance, the nearest-point derivative formula and $\langle J_p(x-y),e\rangle=0$ give zero derivative along $E$.
At zero distance, the derivative of the $p$th power is also zero because $p>1$.
Absolute continuity therefore makes this distance function constant on a neighborhood of $z$.
The set of zero-distance points on the segment is consequently both closed and relatively open.
Since it contains $c$, it is the whole segment.

Now view the same menu in the restriction fixing $j$ at $c$.
If agent $i$ were always exactly served, the menu would be the whole plane, contradicting its local line structure near $c$ in the original restriction.
If agent $k$ were always exactly served, take $z\in(a,c]$ arbitrarily close to $a$.
At reports $a,z,c$, exact service of $k$ and membership of $z$ in agent $i$'s menu force the two facilities to be $a,z$.
The cost of agent $j$ then tends to $\|c-a\|>0$, while the optimum is at most $\|z-a\|\to0$.
This contradicts finite approximation.
Thus the new restriction has the line structure of Lemma~\ref{lem:two-planar-lines}.
Its menu near $a$ contains points of $(a,c]$, so its line for agent $i$ is $E$.
\end{proof}

\begin{proof}[Proof of Theorem~\ref{thm:two-planar}]
By Lemma~\ref{lem:two-replication}, it suffices to rule out a finite approximation for three agents.
Suppose first that one fixed-report restriction has the line structure, and write $E_1^3(a)=E$, $E_2^3(a)=F$.
Choose nonzero $e\in E$ and $v\in F$.
Apply Lemma~\ref{lem:two-line-transfer} from $a$ to $a+e$, fixing agent $2$ there.
Agent $1$ retains line $E$, so agent $3$ has the complementary line $F$.
Transfer agent $3$'s line first to a restriction fixing agent $1$ at $a+e+v$, and then to one fixing agent $2$ at $a+e+2v$.
This gives $E_3^2(a+e+2v)=F$.
On the other hand, start with $E_2^3(a)=F$ and transfer to the restriction fixing agent $1$ at $a+2v$.
There agent $3$ has line $E$.
Transferring along $e$ gives $E_3^2(a+e+2v)=E$, a contradiction.
The complementary line is unique in the plane, so all of these identifications use the same pair $E,F$.

Every fixed-report restriction must therefore have an agent who is always exactly served.
Fix agent $3$ at $a$, and relabel the others so that agent $1$ is always served.
Next fix agent $1$ at $b\ne a$.
If agent $3$ is always served in this restriction, then $f(b,x_2,a)=\{a,b\}$ for every $x_2$.
Sending $x_2$ to infinity makes the mechanism cost unbounded, whereas the optimum is at most $\|a-b\|$.
If agent $2$ is always served instead, then $f(b,x_2,a)=\{b,x_2\}$ for $x_2\ne b$.
Letting $x_2\to b$ makes the optimum tend to zero, while agent $3$'s cost tends to $\|a-b\|>0$.
Both cases contradict finite approximation.
\end{proof}

\subsection{Anonymous Lower Bounds in Higher Dimensions}\label{sec:two-anonymous}

In higher dimensions, exchange symmetry makes the two agents' direction sets coincide.
Lemma~\ref{lem:two-direction-compatibility} then forces their common cone to be pointed, allowing us to apply the propagation lemma directly.

\begin{theorem}\label{thm:two-anonymous}
For every finite $d\ge2$, every $1<p<\infty$, and every $n\ge3$, no anonymous deterministic SP two-facility mechanism in $\ell_p^d$ has a finite social-cost approximation ratio.
The same conclusion holds if anonymity is replaced by invariance under exchanging one fixed pair of agents.
\end{theorem}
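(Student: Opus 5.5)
By Lemma~\ref{lem:two-replication}, applied with the specified pair as the two unreplicated reports, it suffices to rule out a three-agent SP mechanism with finite ratio $\rho$ that is invariant under exchanging agents $i$ and $j$. Call the third agent $k$. The plan is to show that, for every fixed report $a$ of agent $k$ and every $q\ne a$, the cone $C_i(q)$ is pointed. Lemma~\ref{lem:two-pointed} then makes one of $i,j$ always exactly served with $k$ fixed at $a$. Symmetry will turn this into exact service of \emph{both} agents, which contradicts finite approximation.

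\textbf{Step 1: the cones coincide.} Exchange symmetry gives $f(a,b,c)=f(a,c,b)$ as facility sets. Hence $S_i(q)=\bigcup_b f(a,b,q)=\bigcup_b f(a,q,b)=S_j(q)$. It follows that $D_i(q)=D_j(q)$ and $C_i(q)=C_j(q)=:C$.

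\textbf{Step 2: $C$ is pointed.} Suppose $C$ contains a nonzero line, so both $u$ and $-u$ lie in $C$. Lemma~\ref{lem:two-direction-compatibility} gives $\langle J_p(w),v\rangle\ge0$ for all $w,v\in D_i(q)=D_j(q)$. For fixed $w$ this is linear in $v$, so it extends to all $v\in C$. Taking $v=\pm u$ gives $\langle J_p(w),u\rangle=0$ for every $w\in D_i(q)$.

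The delicate point is that $u$ need not itself be a generator. I would argue as follows.
\begin{itemize}
\item Since $u$ is a conical combination of generators $w_m$, we get $\sum_m\lambda_m\langle J_p(w_m),u\rangle=0$.
\item Writing $u=\sum\lambda_m w_m$ and $-u=\sum\mu_l w'_l$, we get $0=\langle J_p(w),u+(-u)\rangle$ termwise.
\item The cleanest route is to test $w$ against the generators themselves. Any $w\in D_i(q)$ satisfies $\langle J_p(w),w\rangle=1>0$.
\item Since $-u=\sum\mu_l w'_l$ with all $\langle J_p(w'_l),\cdot\rangle\ge0$ on $C$, pairing $J_p(w'_l)$ with $u+(-u)=0$ gives $\langle J_p(w'_l),u\rangle=-\sum_{l'}\mu_{l'}\langle J_p(w'_l),w'_{l'}\rangle\le-\mu_l<0$ whenever $\mu_l>0$. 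This contradicts $\langle J_p(w'_l),u\rangle\ge0$.
\end{itemize}
This sign argument is where I expect care is needed: the diagonal terms are strictly positive and the off-diagonal terms are nonnegative, so a nonzero conical combination summing to zero is impossible. In other words, $0\notin$ the conical hull of the generators except trivially, and therefore $C\cap(-C)=\{0\}$. I would write this out using finite combinations of realized directions and then pass to the closure.

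\textbf{Step 3: both agents are exactly served, giving the contradiction.} With $C_i(q)$ pointed, Lemma~\ref{lem:two-pointed} gives $c\in f(a,b,c)$ for all $b,c$. With $C_j(q)=C_i(q)$ pointed, the same lemma with the roles of $i,j$ swapped gives $b\in f(a,b,c)$; equivalently, this follows from symmetry. So $f(a,b,c)=\{b,c\}$ for all $b\ne c$. Take $b,c\to a$ with $\|b-c\|$ much smaller than $\|a-b\|$. The optimum is at most $\|b-c\|$, while agent $k$ pays at least $\min\{\|a-b\|,\|a-c\|\}$, so the ratio is unbounded. Alternatively, fix $b,c$ and send $a$ far away. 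This contradicts finite $\rho$. Since only the exchange of $i,j$ was used, the argument also covers the weaker hypothesis.
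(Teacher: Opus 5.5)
Your proposal is correct and follows the paper's proof essentially step for step: you reduce while preserving the exchangeable pair, get $D_i(q)=D_j(q)$ from symmetry, and use the positive diagonal term with Lemma~\ref{lem:two-direction-compatibility} to rule out a vanishing nontrivial conical combination; Lemma~\ref{lem:two-pointed} plus symmetry then forces $f(a,b,c)=\{b,c\}$. The one step you only gesture at is passing from finite combinations to the closed cone $C_i(q)$, since your sign argument assumes $-u$ is a finite combination of generators; the paper handles this by noting $0\notin\operatorname{conv}D$ for the compact set $D$ and strictly separating with a functional $\ell\ge\delta$ on $D$, so that $\ell(w)\ge\delta\|w\|$ survives the closure.
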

\begin{proof}
Apply Lemma~\ref{lem:two-replication}, keeping the interchangeable pair as agents $i,j$ and grouping the remaining agents into agent $k$.
Fix $k$ at $a$ and take any $q\ne a$.
Exchange symmetry gives $D_i(q)=D_j(q)$.
If this common set is empty, its cone is $\{0\}$ and is pointed.
Otherwise denote it by $D$.
It is a nonempty compact subset of the unit sphere, and Lemma~\ref{lem:two-direction-compatibility} gives $\langle J_p(u),v\rangle\ge0$ for all $u,v\in D$.

We claim that $0\notin\operatorname{conv}D$.
If $\sum_{r=1}^m\lambda_ru_r=0$ with $u_r\in D$, $\lambda_r>0$, and $\sum_r\lambda_r=1$, then
\[
    0=\left\langle J_p(u_1),\sum_{r=1}^m\lambda_ru_r\right\rangle
    =\sum_{r=1}^m\lambda_r\langle J_p(u_1),u_r\rangle
    \ge\lambda_1>0,
\]
a contradiction.
Compactness of $\operatorname{conv}D$ therefore gives a linear functional $\ell$ and $\delta>0$ such that $\ell(u)\ge\delta$ for every $u\in D$.
For any nonnegative combination $w=\sum_r\beta_ru_r$,
\[
    \ell(w)\ge\delta\sum_r\beta_r\ge\delta\|w\|.
\]
The inequality persists under closure, so the common direction cone cannot contain both $w\ne0$ and $-w$.
It is pointed.

Lemma~\ref{lem:two-pointed} implies that agent $j$ is always exactly served when $k$ is fixed at $a$.
Exchange symmetry gives the same conclusion for agent $i$.
Thus $f(a,b,c)=\{b,c\}$ whenever $b\ne c$.
Take $b,c$ tending to the same point different from $a$.
Their distance, and hence the optimal cost, tends to zero, whereas the cost of agent $k$ stays bounded away from zero.
This contradicts a finite approximation ratio.
\end{proof}

These proofs use strict convexity for segment persistence and local uniqueness, and the differentiable duality map for propagation.
The cases $p=1,\infty$ remain open.

\section{Extension to Multiple Facilities}\label{sec:multiple-facilities}

The two-facility lower bounds extend to every $k\ge3$ with $n>k$.
We place $k-2$ additional agents far from the remaining agents and from one another.
As in the separated-instance arguments of~\cite[Section 2]{fotakis2014power}, a finite approximation guarantee forces each distant agent to be served by a distinct facility.
At most two facilities remain available to the original agents.
We turn these facilities into a two-facility mechanism by sending the additional reports to infinity and taking a common limit over all profiles.
This common limit preserves every SP inequality.

\begin{lemma}[Distant-agent reduction]\label{lem:multiple-reduction}
Let $X$ be a finite-dimensional normed space and $2\le k<n$.
If there is a deterministic SP $\rho$-approximation for $n$ agents and $k$ facilities in $X$, then there is a deterministic SP $\rho$-approximation for $n-k+2$ agents and two facilities in the same space.
The reduction preserves anonymity.
It also preserves invariance under exchanging any fixed pair of agents retained in the smaller instance.
\end{lemma}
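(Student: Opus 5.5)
Let $m=n-k+2$ and keep the first $m$ agents as the retained agents. The other $k-2$ agents act as distant agents. Fix $k-2$ unit vectors $w_1,\dots,w_{k-2}$ and scales $\lambda_1<\lambda_2<\cdots<\lambda_{k-2}$ growing fast enough that pairwise distances separate linearly. For a parameter $T\to\infty$, place distant agent $l$ at $z_l(T)=\lambda_l T w_l$. Given a retained profile $\mathbf{x}\in X^m$, consider
\[
\Phi_T(\mathbf{x})=f\bigl(\mathbf{x},z_1(T),\dots,z_{k-2}(T)\bigr).
\]
A finite approximation $\rho$ gives $\operatorname{SC}\le\rho\,\OPT_k$. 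Moreover, $\OPT_k(\mathbf{x},z(T))\le\OPT_2(\mathbf{x})$, since each distant agent can get its own facility. Hence every agent's cost is at most $\rho\OPT_2(\mathbf{x})$, which is bounded independently of $T$. Therefore each $z_l(T)$ has a facility within distance $\rho\OPT_2(\mathbf{x})$. For large $T$ these facilities are distinct, because the distant points are far apart. They also lie far from the retained reports, so they serve no retained agent.

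This leaves at most two facilities within distance $\rho\OPT_2(\mathbf{x})$ of the retained reports. Define $g_T(\mathbf{x})$ as the set of those facilities of $\Phi_T(\mathbf{x})$ that lie within a ball of radius $R(\mathbf{x})=\max_i\|x_i\|+(\rho+1)\OPT_2(\mathbf{x})+1$. For $T$ large this set has at most two points. It includes every nearest facility of every retained agent, and it consists of bounded points. Then $\operatorname{SC}(\mathbf{x},g_T(\mathbf{x}))\le\rho\OPT_2(\mathbf{x})$.

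The main obstacle is taking a common limit over all profiles while keeping SP. SP for $g_T$ is exact for every $T$ once $T$ is large relative to both $\mathbf{x}$ and the deviation $\mathbf{x}'$. A retained agent's cost with respect to $g_T$ equals her cost with respect to $\Phi_T$, since her nearest facility is never a distant one. So SP of $f$ transfers, but the threshold on $T$ depends on the pair of profiles.

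To obtain a single limit mechanism, I would use a compactness and diagonal argument. Fix a countable dense set $Q\subset X^m$, say profiles with rational coordinates. Extract a sequence $T_s\to\infty$ along which $g_{T_s}(\mathbf{x})$ converges in the Hausdorff sense for every $\mathbf{x}\in Q$. This is possible because the sets are bounded and have at most two points. The limit set is nonempty with at most two points, and it satisfies the approximation bound by continuity of the cost in the facility locations.

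Non-dense profiles need a separate treatment, since SP mechanisms need not be continuous. A better choice is to avoid density and use an ultrafilter limit along $T\to\infty$. Take a nonprincipal ultrafilter $\mathcal U$ on $\mathbb{N}$, set $T=s$, and define $\bar g(\mathbf{x})=\mathcal U\text{-}\lim_s g_s(\mathbf{x})$. This limit exists in the compact space of nonempty subsets of a ball with at most two points, under the Hausdorff metric. It is simultaneously defined for all profiles, which is exactly the ``common limit''.

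SP passes to the limit. For each pair $\mathbf{x},(x_i',\mathbf{x}_{-i})$, the inequality $c_i(\mathbf{x},g_s(\mathbf{x}))\le c_i(\mathbf{x},g_s(x_i',\mathbf{x}_{-i}))$ holds for all large $s$, hence on a set in $\mathcal U$. The nearest-distance function is continuous under Hausdorff convergence, so the inequality holds for $\bar g$. The approximation bound passes to the limit the same way.

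Symmetry among retained agents also passes to the limit. The distant reports do not depend on $\mathbf{x}$. So if $f$ is invariant under permuting retained agents, or under exchanging a fixed retained pair, then $g_s$ inherits the same invariance for each $s$. Because the ultrafilter is fixed independently of the profile, $\bar g$ inherits it too.

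Anonymity in the lemma means anonymity among the retained $m$ agents, which is what is needed. Combined with Lemma~\ref{lem:two-replication} and Theorems~\ref{thm:two-planar} and~\ref{thm:two-anonymous}, this reduction handles $k>2$, since $m=n-k+2\ge3$.
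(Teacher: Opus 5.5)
Your proof is correct and has the same structure as the paper's. Distant agents are forced onto their own facilities, the output is truncated to at most two retained facilities, one compactness limit is taken simultaneously at every profile, and the SP and approximation inequalities pass to the limit pointwise. Your nonprincipal ultrafilter, with Hausdorff limits in the profile-dependent ball $\overline B(0,R(\mathbf{x}))$, replaces the paper's Tychonoff subnet in $\prod_{\mathbf{x}}\overline X^2$ and works equally well. Because your radius is bounded for each fixed profile, you also avoid the paper's point $\infty$ and its norm-sorted ordering of the two facilities.

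One small correction concerns your SP step. At the deviating profile, your fixed-radius truncation need not preserve agent $i$'s cost at her true location $x_i$: a spare facility near $x_i$ can lie outside $\overline B(0,R(x_i',\mathbf{x}_{-i}))$ without being a distant one. You do not need equality there, only
\[
\operatorname{dist}(x_i,g_T(x_i',\mathbf{x}_{-i}))\ge\operatorname{dist}(x_i,\Phi_T(x_i',\mathbf{x}_{-i})),
\]
which holds because $g_T(x_i',\mathbf{x}_{-i})\subseteq\Phi_T(x_i',\mathbf{x}_{-i})$.
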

\begin{proof}
The cases $k=2$ and $X=\{0\}$ are immediate, so suppose $k\ge3$ and $X\ne\{0\}$, and let $f$ be the original mechanism.
Retain $m=n-k+2$ agents, with arbitrary profile $\mathbf{x}=(x_1,\ldots,x_m)$.
Fix a unit vector $e$ and place the additional agents at $Re,2Re,\ldots,(k-2)Re$, where $R$ ranges over positive integers.
Write $\mathbf{x}^{R}=(\mathbf{x},Re,\ldots,(k-2)Re)$ for the augmented profile.
Placing facilities at all $k-2$ additional reports and at an optimal pair for $\mathbf{x}$ shows that $\OPT_k(\mathbf{x}^R)\le\OPT_2(\mathbf{x})$.
The approximation guarantee of $f$ therefore gives
\begin{equation}\label{eq:multiple-cost}
    \operatorname{SC}\bigl(\mathbf{x}^{R},f(\mathbf{x}^{R})\bigr)
    \le\rho\,\OPT_2(\mathbf{x}).
\end{equation}
For a fixed $\mathbf{x}$, this bound is independent of $R$.
Thus every additional agent has a facility within distance $\rho\,\OPT_2(\mathbf{x})$.
For sufficiently large $R$, these $k-2$ neighborhoods are disjoint and lie outside $\overline B(0,\frac R3)$.
At most two facilities can therefore lie inside this ball.
Meanwhile, each retained agent has a nearest facility within a bounded distance of her fixed report, so all such nearest facilities lie inside the ball for sufficiently large $R$.

Define $g_R(\mathbf{x})$ by retaining the facilities of $f(\mathbf{x}^{R})$ inside $\overline B(0,\frac R3)$.
If this leaves no facility or more than two, set $g_R(\mathbf{x})=\{0\}$.
For each fixed profile, the preceding bounds ensure that $g_R$ retains between one and two facilities for all sufficiently large $R$.
The required threshold may depend on the profile.
Equation~\eqref{eq:multiple-cost} then implies, for all sufficiently large $R$,
\begin{equation}\label{eq:multiple-core-cost}
    \operatorname{SC}(\mathbf{x},g_R(\mathbf{x}))\le\rho\,\OPT_2(\mathbf{x}).
\end{equation}

Each fixed incentive comparison also holds for all sufficiently large $R$.
To see this, fix a profile $\mathbf{x}$ and a deviation $x_i'$.
For both profiles $\mathbf{x}$ and $(x_i',\mathbf{x}_{-i})$, all retained reports have bounded service distances as $R\to\infty$.
At the deviating profile, a facility near $x_i'$ is also at a bounded distance from the true location $x_i$.
Every deleted facility, by contrast, is at distance at least $\frac R3-\|x_i\|$ from $x_i$.
Thus deleting the distant facilities preserves both costs in this fixed incentive comparison for all sufficiently large $R$.
SP of $f$ gives
\begin{equation}\label{eq:multiple-eventual-sp}
    \operatorname{dist}(x_i,g_R(\mathbf{x}))
    \le\operatorname{dist}(x_i,g_R(x_i',\mathbf{x}_{-i})).
\end{equation}

To preserve these inequalities, we need convergence along the same indices at every truthful and deviating profile.
We therefore regard each rule $g_R$ as a point in a product space, with one coordinate for each profile.
Use the one-point compactification $\overline X=X\cup\{\infty\}$, in which a sequence tends to $\infty$ precisely when it eventually leaves every bounded set.
Represent each $g_R(\mathbf{x})$ by an ordered pair in $\overline X^2$, sorting finite points by norm and breaking ties lexicographically in fixed coordinates.
If there is only one facility, use $\infty$ as the second entry.
Finite dimensionality makes closed bounded subsets of $X$ compact, and adjoining $\infty$ gives the compact space $\overline X$.
Tychonoff's theorem makes the product $\prod_{\mathbf{x}\in X^m}\overline X^2$ compact as well.
Consequently, the sequence of rules has a convergent subnet, giving convergence at every profile along the same choice of indices.
A subnet is the generalization of a subsequence used for compact spaces whose products may have uncountably many coordinates.
Its indices still satisfy $R\to\infty$, so every inequality that holds for all sufficiently large $R$ also holds eventually along this subnet.

At each fixed profile, at least one limit point is finite.
Indeed,~\eqref{eq:multiple-core-cost} gives a facility with norm at most $\|x_1\|+\rho\,\OPT_2(\mathbf{x})$.
The first entry has the smallest norm, so it stays in a fixed compact subset of $X$ and has a finite limit.
Let $g(\mathbf{x})$ be the set of finite entries in its limiting pair, deleting $\infty$ and any duplicate.
This is a nonempty set of at most two facilities.
For every fixed location $z\in X$, convergence of the pairs gives
\[
    \operatorname{dist}(z,g_R(\mathbf{x}))
    \longrightarrow\operatorname{dist}(z,g(\mathbf{x}))
\]
along the selected subnet: finite entries converge in norm, and entries tending to $\infty$ eventually become farther from $z$ than any finite limiting entry.
Applying this convergence at each $x_i$ and summing proves the approximation guarantee from~\eqref{eq:multiple-core-cost}.
Applying it at the true location $x_i$ for both profiles in~\eqref{eq:multiple-eventual-sp} proves SP.

Finally, truncating the facility set, assigning $\{0\}$ when needed, and ordering the outputs all preserve symmetries among the retained agents.
If two profiles differ by a permutation under which $f$ is invariant, their ordered pairs are equal for every $R$.
Their limiting pairs are therefore equal as well, proving both symmetry assertions.
\end{proof}

\begin{theorem}\label{thm:multiple-lower-bounds}
Let $1<p<\infty$ and $2\le k<n$.
In $\ell_p^2$, no deterministic SP $k$-facility mechanism has a finite social-cost approximation ratio.
In every finite dimension $d\ge2$, the same conclusion holds for anonymous deterministic SP mechanisms in $\ell_p^d$.
For the latter conclusion, invariance under exchanging one fixed pair of agents suffices.
\end{theorem}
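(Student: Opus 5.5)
The proof is a direct combination of the distant-agent reduction (Lemma~\ref{lem:multiple-reduction}) with the two-facility impossibility results (Theorems~\ref{thm:two-planar} and~\ref{thm:two-anonymous}). We argue by contradiction. Suppose that for some $1<p<\infty$ and $2\le k<n$ there is a deterministic SP $k$-facility mechanism $f$ in $\ell_p^d$ with finite approximation ratio. Then $f$ is a $\rho$-approximation for some finite $\rho$. The case $k=2$ is exactly Theorems~\ref{thm:two-planar} and~\ref{thm:two-anonymous}, so the work is in $k\ge3$.

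For $k\ge3$, apply Lemma~\ref{lem:multiple-reduction} with $X=\ell_p^d$, a finite-dimensional normed space. This yields a deterministic SP $\rho$-approximation for $m=n-k+2$ agents and two facilities in the same space. Since $n>k$, we have $m\ge3$, so the two-facility theorems apply to $m$. In the planar case $d=2$, no further assumption is needed, and Theorem~\ref{thm:two-planar} with $n$ replaced by $m$ gives the contradiction.

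For the anonymous case in general dimension $d\ge2$, we use the symmetry clause of Lemma~\ref{lem:multiple-reduction}. If $f$ is anonymous, the reduced mechanism is anonymous. If $f$ is only invariant under exchanging one fixed pair of agents, we choose which agents to retain so that this pair is among the $m\ge3$ retained agents. This is possible because the $k-2$ distant agents can be any agents outside the pair: there are $n-2\ge k-1>k-2$ such agents. The reduced mechanism then inherits the pair-exchange invariance, and Theorem~\ref{thm:two-anonymous} with $m$ agents gives the contradiction.

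The only step needing care is the choice of which agents to send to infinity, so that the designated symmetric pair survives the reduction. The lemma's construction allows an arbitrary choice of the additional agents, so this should be immediate. Everything else is bookkeeping, namely checking $m\ge3$ and that the approximation factor $\rho$ stays finite.
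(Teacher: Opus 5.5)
Your proposal is correct and follows the paper's proof: apply the distant-agent reduction (Lemma~\ref{lem:multiple-reduction}) to obtain an SP two-facility $\rho$-approximation for $n-k+2\ge3$ agents, retaining the exchangeable pair, and then invoke Theorem~\ref{thm:two-planar} in the plane or Theorem~\ref{thm:two-anonymous} under the symmetry assumption. Your check that enough agents lie outside the pair to be sent far away is the same bookkeeping the paper does when it says to retain that pair.
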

\begin{proof}
A finite-approximation mechanism would, by Lemma~\ref{lem:multiple-reduction}, induce a two-facility mechanism for $n-k+2\ge3$ agents with the same finite guarantee.
In the plane this contradicts Theorem~\ref{thm:two-planar}.
For the higher-dimensional claim, the reduction preserves anonymity, contradicting Theorem~\ref{thm:two-anonymous}.
If only one pair is interchangeable, retain that pair among the $n-k+2$ agents and apply the stronger assertion of Theorem~\ref{thm:two-anonymous}.
\end{proof}

\section{Conclusion}\label{sec:conclusion}

We established optimality and impossibility results for deterministic strategyproof facility location under social cost.
For a single facility in the Euclidean plane, the coordinate-wise median achieves the best approximation ratio for every $n\ge3$, even without anonymity.
Its optimality depends on the geometry of the space: in $\ell_\infty^d$, selecting the lexicographically smallest social-cost minimizer is strategyproof and achieves ratio $1$.
For multiple facilities, we proved that no deterministic SP mechanism in $\ell_p^2$, where $1<p<\infty$, has a finite approximation ratio when $2\le k<n$.
In higher dimensions, invariance under exchanging one fixed pair of agents suffices for the same conclusion.
A distant-agent reduction transfers the two-facility lower bounds to larger numbers of facilities while preserving the approximation guarantee and strategyproofness.

Several questions remain for multidimensional spaces.
Can the symmetry assumption be removed from the higher-dimensional lower bound?
For the endpoint norms $p=1$ and $p=\infty$, do multiple facilities admit finite approximation guarantees?
On the real line, another question concerns $3\le k<n$.
Fotakis and Tzamos~\cite{fotakis2014power} ruled out finite approximation guarantees under anonymity.
Does a finite social-cost approximation become possible without anonymity?

\paragraph{AI Use Statement.}
After several unsuccessful attempts to prove the theorems for multiple facilities, the author used GPT Astra-6 to explore possible proofs.
Codex was also used to assist with drafting the paper and checking the proofs.
The author carefully read and checked the content and takes responsibility for its correctness.

\bibliographystyle{plain}
\bibliography{mybibfile}

\end{document}